
\documentclass{amsart}%
\usepackage{amsmath}
\usepackage{amsfonts}
\usepackage{amssymb}
\usepackage{graphicx}
\usepackage{todonotes}
\usepackage{mathtools}
\usepackage[linesnumbered,lined,boxed,commentsnumbered,rightnl]{algorithm2e}
\usepackage{caption}
\usepackage{subcaption}%
\setcounter{MaxMatrixCols}{30}
\providecommand{\U}[1]{\protect\rule{.1in}{.1in}}
\newtheorem{theorem}{Theorem}[section]

\newtheorem{assumption}[theorem]{Assumption}

\newtheorem{example}[theorem]{Example}

\newtheorem{lemma}[theorem]{Lemma}

\newtheorem{proposition}[theorem]{Proposition}
\newtheorem{remark}[theorem]{Remark}

\DeclareMathOperator{\esssup}{ess\,sup}
\DeclareMathOperator{\spn}{span}

\DeclareMathOperator*{\argmin}{arg\,min}
\numberwithin{equation}{section}
\begin{document}
\title[Dyn. programming for opt. stopping via pseudo-regression]{Dynamic programming for optimal stopping via pseudo-regression}
\author{Christian Bayer, Martin Redmann, John Schoenmakers}
\maketitle

\begin{abstract}
We introduce new variants of classical regression-based algorithms for optimal
stopping problems based on computation of regression coefficients by Monte
Carlo approximation of the corresponding $L^{2}$ inner products instead of the
least-squares error functional. Coupled with new proposals for simulation of
the underlying samples, we call the approach ``pseudo regression''. A detailed convergence analysis is provided and it is shown that
the approach asymptotically leads to less computational cost for a
pre-specified error tolerance, hence to lower complexity. The method is
justified by numerical examples.

\end{abstract}

\section{Introduction}

Stochastic optimal stopping problems (in discrete time) play an important role
in the theoretical as well as in the numerical literature on stochastic
optimal control, since they are both generally considered difficult to solve
and have many practical applications, in particular in energy and finance
(where American or Bermudan options can naturally be understood as stochastic
optimal stopping problems).

Many numerical methods have been suggested, ranging from PDE techniques (based
on the Hamilton-Jacobi-Bellman equation of the associated continuous-time
problem), to Monte Carlo (simulation) based approaches involving regression
techniques, policy iteration, duality, and more. For an overview, see for
instance \cite{G04}, \cite{BS18}.

In this paper, we consider stochastic approaches based on the Bellman
equation. A key ingredient of the classical algorithms such as the ones
proposed by Longstaff and Schwartz~\cite{LS2001} or Tsitsiklis and Van
Roy~\cite{TV2001} is (global) regression, used to compute a conditional
expectation of, say, $u(z) \coloneqq \mathbb{E}[Y|Z=z]$ for some random
variables $Y$ and $Z$. Given basis functions $\psi_{1}, \ldots, \psi_{K}$, one
thus looks for the best approximation of the unknown function $u$ in the
linear $\mathrm{span} \{\psi_{1}, \ldots, \psi_{K}\}$ with respect to the
distribution of $Z$ denoted by $\mu$, i.e., we would ideally like to solve the
minimization problem
\[
\beta^{\ast}\coloneqq \argmin_{\beta\in\mathbb{R}^{K}} \mathbb{E}\left[
\left|  Y - \sum_{k=1}^{K} \beta_{k} \psi_{k}(Z) \right|  ^{2} \right]  ,
\]
in order to find an approximation $u(\cdot) \approx\sum_{k=1}^{K} \beta
_{k}^{\ast}\psi_{k}(\cdot) \eqqcolon u^{K}(\cdot)$. Classically, the above
minimization problem is directly translated into the corresponding
least-squares problem based on Monte Carlo approximation of the expectation,
i.e., for i.i.d.~samples $(Y^{i},Z^{i})$, $i=1, \ldots, M$, one solves
\begin{equation}
\label{eq:least-squares-regression}\widehat{\beta} \coloneqq \argmin_{\beta
\in\mathbb{R}^{K}} \sum_{i=1}^{M} \left|  Y^{i} - \sum_{k=1}^{K} \beta_{k}
\psi_{k}(Z^{i}) \right|  ^{2}.
\end{equation}
While well-understood by now, it is worth-while to recall that the analysis of
the convergence of $\widehat{\beta}$ as $M \to\infty$ is not trivial due to
the reliance on random matrix theory, see, for instance, \cite{Gy2002}.
Instead of approximating the minimization problem by Monte Carlo simulation it
is also possible to directly approximate the solution $\beta^{\ast}$. Indeed,
note that $u^{K}$ is, of course, the linear projection of $u$ to
$\mathrm{span} \{\psi_{1}, \ldots, \psi_{K}\}$ in the $L^{2}(\mu)$-sense.
Hence, assuming for ease of notation that the basis functions $\psi_{1},
\ldots, \psi_{K}$ are orthonormal w.r.t.~$\mu$---the general case requires
multiplication with the Gram matrix formed by $\left\langle \psi_{k} \, ,
\psi_{l} \right\rangle _{L^{2}(\mu)}$---we have
\[
\beta_{k}^{\ast}= \left\langle u \, , \psi_{k} \right\rangle _{L^{2}(\mu)} =
\mathbb{E}\left[  \mathbb{E}[Y|Z] \psi_{k}(Z) \right]  = \mathbb{E}[ Y
\psi_{k}(Z) ].
\]
This formula, however, can be immediately approximated by Monte Carlo
simulation giving
\begin{equation}
\label{eq:L2-regression}\overline{\beta}_{k} \coloneqq \frac{1}{M} \sum
_{i=1}^{M} Y^{i} \psi_{k}(Z^{i}), \quad k=1, \ldots, K.
\end{equation}
From a technical point of view, convergence analysis of $\overline{\beta}$ is
relatively straightforward and leads to squared error terms of the order
$\frac{K}{M}$ (see Theorem~\ref{psth}). On the other hand, the squared error
due to the solution $\widehat{\beta}$ of the least squares problem is of order
$\frac{(1+\ln M) K}{M}$ (see Theorem~\ref{rgth}). At the same time, computing
$\overline{\beta}$ is also cheaper compared to computing $\widehat{\beta}$, as
we avoid computing a system of linear equations (see the discussions in
Section~\ref{sec:computational-cost}). However, as we see in the later
Section~\ref{alg}, computation of $\overline{\beta}$ does rely on knowledge of
the Gram matrix associated to the basis functions and the measure $\mu$.

Another important detail of regression based algorithms, especially as
consecutive regression steps are required, is the choice of random variables
$(Y,Z)$. Clearly, the result of the regression procedure (just as the
conditional expectation) only depends on the conditional distribution of $Y$
given $Z$, but not on the distribution of $Z$ itself, which gives us
considerable freedom.

In the context of Bermudan options, let $X_{j}$ denote the underlying process
at time $j$ (see Section~\ref{sec:recap-optim-stopp} for more details) and let
$v_{j}$ denote the option value at time $j$. Then, in the simplest case of
dynamical programming, we need to evaluate conditional expectations
$\mathbb{E}[v_{j}(X_{j})|X_{j-1} = z]$. Hence, a very natural implementation
of the regression procedure above will be based on $M$ samples $(X^{i}_{0},
\ldots, X^{i}_{\mathcal{J}})$ of the whole trajectory until the expiry time
$\mathcal{J}$ of the option, iteratively using slices $Y^{i} \equiv
v_{j}(X^{i}_{j})$ and $X^{i} \equiv X^{i}_{j-1}$ in the above notation, for
$j=1, \ldots, \mathcal{J}$. Hence, the distribution $\mu$ of $X$ will depend
on $j$.

An alternative approach, especially advantageous when $X_{j}$ is a homogeneous
Markov process, i.e., when the conditional distribution of $X_{j}$ given
$X_{j-1} = z$ does not depend on $j$, is to fix a (carefully chosen)
probability measure $\mu$ for all $j$. Now sample r.v.s $U^{i}$ from $\mu$ and
$X^{i}$ from the conditional distribution of $X_{j}$ given $X_{j-1} = U^{i}$.
Hence, we obtain
\[
\mathbb{E}[v_{j}(X)|U=z] = \mathbb{E}[v_{j}(X_{j})|X_{j-1} = z],
\]
and we can use the same batch of samples for each of the consecutive
regression steps for $j=1, \ldots, \mathcal{J}$, considerably reducing the
computational time of the algorithm. As an added benefit, we are now free to
choose the probability measure $\mu$. This allows us to specifically choose
both $\mu$ and the basis function $\psi_{1}, \ldots, \psi_{K}$ such that the
basis functions are already orthogonal w.r.t.~$\mu$, implying a trivial Gram matrix.

In what follows, we call the combination of using a fixed set of sampled
trajectories $X^{i}_{0}, \ldots, X^{i}_{\mathcal{J}}$ with the least-squares
estimator~(\ref{eq:least-squares-regression}) \emph{standard regression}, and
we call a combination of samples $(U^{i},X^{i})$ based on a arbitrarily chosen
measure $\mu$ together with the $L^{2}$-projection
estimator~(\ref{eq:L2-regression}) \emph{pseudo regression}. We argue that
pseudo regression has both theoretical and numerical advantages compared with
standard regression for many Bermudan option problems. Indeed,

\begin{itemize}
\item the convergence rates for the number of samples $M \to\infty$ are better
due to the missing $\ln(M)$-term (see Theorems~\ref{psth} and~\ref{rgth});

\item the asymptotic number of floating point operations necessary is smaller
(see Section~\ref{sec:computational-cost});

\item numerical examples indicate lower computational costs for fixed error
tolerance, in line with the theory, see Section~\ref{sec:numer-exper}.
\end{itemize}
Last but not least, we provide a detailed 
analysis yielding explicit convergence rates for the pseudo regression versions of both the Tsitsiklis--van Roy
and the Longstaff--Schwartz algorithm.

\subsection*{Outline of the paper}

In Section~\ref{sec:recap-optim-stopp} we recapitulate some theory of optimal
stopping in discrete time and recall the (classical) Tsitsiklis--van Roy and
Longstaff--Schwartz algorithms. In Section~\ref{Mark} we
describe in detail the one-step regression procedures involved for both
\emph{standard} and \emph{pseudo regression}. In
Section~\ref{gvps} we state a general convergence result for the
pseudo-regression approach (Theorem~\ref{psth}), a convergence result for the
pseudo-regression version of Longstaff--Schwartz (Theorem~\ref{thm_main}), and
a similar convergence result for the pseudo-regression version of
Tsitsiklis--van Roy (Theorem~\ref{thm_mainTV}). We discuss the computational
cost for the different variants of the algorithms in
Section~\ref{sec:computational-cost} and give numerical examples in
Section~\ref{sec:numer-exper}. We conclude with a summary and an outline of
future research in Section~\ref{sec:conclusions}. More technical
proofs are deferred to the Appendix section.

\section{Recap of optimal stopping in discrete time}

\label{sec:recap-optim-stopp}

\subsection{Theory of optimal stopping in discrete time}

\label{sec:theory-optim-stopp}

Let us recall some facts about the optimal stopping problem in discrete time.
Suppose $(Z_{j}$\thinspace$:$ $j=0,1,\ldots,\mathcal{J})$ is a nonnegative
adapted stochastic process in discrete time on a filtered probability space
$(\Omega,\mathcal{F}_{j},0\leq j\leq\mathcal{J},P),$ which satisfies
\[
\sum_{j=1}^{\mathcal{J}}\mathbb{E}\left[  Z_{j}\right]  <\infty.
\]
In the context of a (discrete time) American or Bermudan option $Z$ may be
regarded as a (discounted) cash-flow process that may be exercised once by the
option holder. More specifically, one may think of $P$ as a pricing measure
corresponding to some num\'{e}raire $\mathcal{N}$ (with $\mathcal{N}_{0}=1$
for simplicity), and $Z=R/\mathcal{N},$ where $(R_{j}$\thinspace$:$
$j=0,1,\ldots,\mathcal{J})$ is a real (not discounted) cash-flow process.
Then, from general no arbitrage principles it is well known that a fair price
of the American option is given by%
\begin{equation}
Y_{0}:=\sup_{\tau\in\mathcal{S}_{0}}\mathbb{E}\left[  Z_{\tau}\right]  ,
\label{os}%
\end{equation}
where $\mathcal{S}_{0}$ denotes the set of $\mathcal{F}$-stopping times taking
values in $\{0,\ldots,\mathcal{J}\}.$ The \textit{Snell envelope}\emph{ }of
$Z$ is defined as%
\begin{equation}
Y_{j}:=\esssup_{\tau\in\mathcal{S}_{j}}\mathbb{E}_{\mathcal{F}_{j}}\left[
Z_{\tau}\right]  ,\text{ \ \ }j=0,...,\mathcal{J}, \label{snel}%
\end{equation}
where $\mathcal{S}_{j}$ denotes the set of $\mathcal{F}$-stopping times taking
values in $\{j,\ldots,\mathcal{J}\}.$ We recall the following classical facts
(e.g. see \cite{Nev75}):

\begin{enumerate}
\item The Snell envelope $Y$ of $Z$ is the smallest super-martingale that
dominates $Z.$ It can be constructed recursively by the \textit{Backward
Dynamic Program principle} or \textit{Bellman principle}:
\begin{align}
Y_{\mathcal{J}}  &  =Z_{\mathcal{J}}\label{BP}\\
Y_{j}  &  =\max\left(  Z_{j},\mathbb{E}_{\mathcal{F}_{j}}\left[
Y_{j+1}\right]  \right)  ,\text{ \ \ }0\leq j<\mathcal{J}.\nonumber
\end{align}

\item An optimal stopping time for (\ref{os}) is given by
\[
\tau^{\ast}=\min\{j:\;0\leq j\leq\mathcal{J},\;Z_{j}\geq\mathbb{E}%
_{\mathcal{F}_{j}}\left[  Y_{j+1}\right]  ]\}
\]
with $Y_{\mathcal{J}+1}:=0.$ That is,%
\[
Y_{0}=\sup_{\tau\in\mathcal{S}_{0}}\mathbb{E}\left[  Z_{\tau}\right]
=\mathbb{E}\left[  Z_{\tau^{\ast}}\right]  .
\]

\end{enumerate}

Thus, in principle, one may arrive at the solution to (\ref{os}) by carrying
out (\ref{BP}) backwardly from $j=\mathcal{J}$ down to $j=0.$ However,
straightforwardly, this leads to a high degree nested expression of
conditional expectations that is virtually impossible to evaluate in practice.

Let us now assume the presence of an underlying Markovian process
$X:=X^{0,x}:=(X_{j}^{0,x}$\thinspace$:$ $j=0,1,\ldots,\mathcal{J}),$ adapted
to $\left(  \mathcal{F}_{j}\right)  ,$ living in $\mathbb{R}^{d},$ and
starting at $X_{0}^{0,x}=x$ a.s. More generally, $\left(  X_{r}^{j,z}%
:r=j,...,\mathcal{J}\right)  $ denotes a random trajectory with $X_{j}%
^{j,z}=z$ a.s. Let us further assume that the cash-flow has the form%
\[
Z_{j}\left(  \omega\right)  =f_{j}(X_{j}\left(  \omega\right)  ),\ \ 0\leq
j\leq\mathcal{J}%
\]
for some functions $\,f_{j}(\cdot):\mathbb{R}^{d}\rightarrow\mathbb{R}_{\geq
0}.$ Then, due to Markovianity, there exist functions $v_{j}(\cdot
):\mathbb{R}^{d}\rightarrow\mathbb{R}_{\geq0},$ such that we may similarly
write%
\[
Y_{j}\left(  \omega\right)  =v_{j}(X_{j}\left(  \omega\right)  ),\ \ 0\leq
j\leq\mathcal{J}.
\]
The Bellman principle now simply says that
\[
v_{j}(X_{j}) = \max\left(  f_{j}(X_{j}), \mathbb{E}[v_{j+1}(X_{j+1}) | X_{j}]
\right)  ,\qquad j<\mathcal{J}.
\]
Henceforth
\[
c_{j}(x) \coloneqq \mathbb{E}[v_{j+1}(X_{j+1}) | X_{j} = x]
\]
is called the \emph{continuation value function}. The numerically challenging
task is, of course, the computation of the $c_{j}$ for $0$ $\leq$ $j$ $<$
$\mathcal{J}$.

\subsection{Standard regression algorithms}

\label{standardregressionTvRLS}

For clarity, let us describe the classical Tsitsiklis--van Roy algorithm in
full detail. Let $(X_{0}^{(m)},\ldots,X_{\mathcal{J}}^{(m)})$, $m=1,\ldots,M$,
denote $M$ independent trajectories from the Markov process $X$. Initialize
$\widehat{v}_{\mathcal{J}}\coloneqq f_{\mathcal{J}},$ $\widehat{c}%
_{\mathcal{J}}\coloneqq0.$ If $\widehat{v}_{j}$ and $\widehat{c}_{j}$ are
already constructed, iteratively construct (backward in time)
\begin{align}
\widehat{\beta}^{(j-1)}  &  \coloneqq\argmin_{\beta\in\mathbb{R}^{K}}%
\sum_{m=1}^{M}\left(  \widehat{v}_{j}(X_{j}^{(m)})-\sum_{k=1}^{K}\beta_{k}%
\psi_{k}(X_{j-1}^{(m)})\right)  ^{2},\label{eq:TVR-classical-regression}\\
\widehat{c}_{j-1}(\cdot)  &  \coloneqq\sum_{k=1}^{K}\widehat{\beta}_{k}%
\psi_{k}(\cdot),\quad\widehat{v}_{j-1}(\cdot)\coloneqq\max(f_{j-1}%
(\cdot),\widehat{c}_{j-1}(\cdot)). \label{eq:TVR-classical-continuation-value}%
\end{align}
After this construction, we can either simply return the approximate value
$\widehat{v}_{0}(X_{0})$, or refine the estimate by simulating the expected
pay-off due to the nearly optimal stopping time,
\[
\widehat{\tau}=\min\left\{  j:0\leq j\leq\mathcal{J},\quad f_{j}%
(X_{j})>\widehat{c}_{j}(X_{j})\right\}  ,
\]
using newly generated independent samples from the process $X.$

The Longstaff--Schwartz algorithm is defined similarly, except that the
regression step~(\ref{eq:TVR-classical-regression}) does not use the
previously constructed value function $\widehat{v}_{j}$, but rather the nearly
optimal stopping time induced by $\widehat{c}_{j}, \ldots, \widehat
{c}_{\mathcal{J}}$. More precisely, the Longstaff--Schwartz algorithm goes as
follows: Initialize for $m$ $=$ $1,...,M,$ $\tau_{\mathcal{J}}^{(m)}%
\coloneqq\mathcal{J},$ $\widehat{c}_{\mathcal{J}}\coloneqq 0.$ If the
$\tau_{j}^{(m)}$ and $\widehat{c}_{j}$ are already constructed, iteratively
construct (backward in time)
\begin{align}
\widehat{\beta}^{(j-1)}  &  \coloneqq\argmin_{\beta\in\mathbb{R}^{K}}%
\sum_{m=1}^{M}\left(  f_{\tau_{j}^{(m)}}(X_{\tau_{j}^{(m)}}^{(m)})-\sum
_{k=1}^{K}\beta_{k}\psi_{k}(X_{j-1}^{(m)})\right)  ^{2},\label{SLS}\\
\widehat{c}_{j-1}(\cdot)  &  \coloneqq\sum_{k=1}^{K}\widehat{\beta}_{k}%
\psi_{k}(\cdot),\label{SLS1}\\
\text{If }f_{j-1}(X_{j-1}^{(m)})  &  >\widehat{c}_{j-1}(X_{j-1}^{(m)})\text{
\ \ then \ }\tau_{j-1}^{(m)}=j-1\text{ \ else \ \ }\tau_{j-1}^{(m)}=\tau
_{j}^{(m)}. \label{SLS2}%
\end{align}
In both algorithms, the regression step itself only relies on two random
variables, which we might as well denote by $(X,Y)\in\mathbb{R}^{d}%
\times\mathbb{R},$ living on some probability space $(\Omega,\mathcal{F}%
,\mathbb{P})$. Consider the problem of estimating the function $u:\mathbb{R}%
^{d}\rightarrow\mathbb{R},$ satisfying%
\begin{equation}
u(X)=\mathbb{E}\left[  Y|X\right]  . \label{form}%
\end{equation}

As indicated, we solve the least squares minimization problem%
\begin{equation}
\widehat{\beta}:=\underset{\beta\in\mathbb{R}^{K}}{\arg\inf}\sum_{m=1}%
^{M}\left(  Y^{(m)}-\sum_{k=1}^{K}\beta_{k}\psi_{k}\left(  X^{(m)}\right)
\right)  ^{2}, \label{lsq}%
\end{equation}
and consider the estimation%
\begin{equation}
\widehat{u}\left(  x\right)  =\sum_{k=1}^{K}\widehat{\beta}_{k}\psi_{k}\left(
x\right)  . \label{ste}%
\end{equation}
It is well-known that by defining the design matrix $\mathcal{N}\in
\mathbb{R}^{M\times K}$ by%
\[
\mathcal{N}_{mk}:=\psi_{k}\left(  X^{(m)}\right)  ,\text{ \ \ }%
m=1,...,M,\text{ }k=1,...,K,
\]
and the vector $\mathcal{Y}\in\mathbb{R}^{M}$ by%
\[
\mathcal{Y}_{m}=Y^{(m)},\text{ \ \ }m=1,...,M,
\]
that the solution to (\ref{lsq}) may be written as%
\begin{equation}
\widehat{\beta}=\frac{1}{M}\left(  \frac{1}{M}\mathcal{N}^{\top}%
\mathcal{N}\right)  ^{-1}\mathcal{N}^{\top}\mathcal{Y}, \label{bh}%
\end{equation}
provided that $\mathcal{N}$ has full rank $K.$ The latter is typically almost
surely the case when $K\leq M.$ Note that,%
\begin{align}
\frac{1}{M}\left[  \mathcal{N}^{\top}\mathcal{N}\right]  _{k,l=1,...,K}  &
\mathcal{=}\frac{1}{M}\sum_{m=1}^{M}\psi_{k}\left(  X^{(m)}\right)  \psi
_{l}\left(  X^{(m)}\right) \label{spe}\\
&  \approx\mathbb{E}\left[  \psi_{k}\left(  X\right)  \psi_{l}\left(
X\right)  \right]  .\nonumber
\end{align}
In general, the marginal distribution of $X$ is not explicitly known and the
inversion of the matrix $\frac{1}{M}\mathcal{N}^{\top}\mathcal{N}$ in
(\ref{bh}) is a main delicate issue since it has random nonnegative
eigenvalues that can be arbitrary close to zero by chance. Furthermore, the
computation of $\mathcal{N}$ requires about $KM$ function calls and the
computation of (\ref{bh}) requires about $K^{2}M$ elementary operations.

\section{The pseudo regression approach}

\label{Mark}

As an alternative to the well-known methods in \cite{LS2001} and \cite{TV2001}
we now propose a backward algorithm for approximating the continuation
functions $c_{j}$ (respectively $v_{j}$) by functions $\overline{c}_{j},$
(respectively $\overline{v}_{j},$) $j=\mathcal{J},...,0,$ in the present setup
that is based on \textit{pseudo regression}. Let us assume that we have chosen
a set of basis functions $\psi_{k}:\mathbb{R}^{d}\rightarrow\mathbb{R},$
$k=1,...,K,$ and a measure $\mu$ concentrated on $\mathcal{D}\subset
\mathbb{R}^{d},$ such that the Gram matrix $\mathcal{G}$ defined by
\begin{equation}
\mathcal{G}_{kl}:=\langle\psi_{k},\psi_{l}\rangle:=\int\psi_{k}(z)\psi
_{l}(z)\mu(dz) \label{gkl}%
\end{equation}
together with its inverse $\mathcal{G}^{-1}$ is explicitly known, or can be
efficiently computed. In the algorithm spelled out below we construct a set of
approximative continuation functions $\overline{c}_{j},$ $j=\mathcal{J}%
,...,0,$ which satisfy%
\[
\overline{c}_{j}(z)\approx c_{j}(z):=\mathbb{E}\left[  v_{j+1}(X_{j+1}%
^{j,z})\right]  =\mathbb{E}\left[  v_{j+1}(X_{j+1}^{0,\cdot})\,|\,X_{j}%
^{0,\cdot}=z\right]  .
\]
Moreover, it is assumed (for simplicity) that we are able to sample
trajectories $X_{\cdot}^{0,x}$ exactly.

The probability measure $\mu$ is used to measure the regression error,
cf.~\eqref{eq:L2-regression}, i.e., we try to minimize the difference between
$c_{j}$ and $\overline{c}_{j}$ in the sense of the $L^{2}(\mu)$-norm. From
that perspective, a natural choice of $\mu$ as induced by the problem at hand
would be the distribution of $X_{J+1}$, but that choice runs afoul of the
requirement that the Gram matrix is known explicitly. We shall see in
Section~\ref{sec:numer-exper} that the problem is not very sensitive to the
choice of $\mu$, such that we can often even choose a uniform (and simple)
reference measure $\mu$ for all $j$ without significant sacrifice in overall accuracy.

\subsection{Pseudo regression variant of Tsitsiklis--van Roy}

\label{alg}

We start with $\overline{v}_{\mathcal{J}}=v_{\mathcal{J}}=f_{\mathcal{J}}$ and
$\overline{c}_{\mathcal{J}}=c_{\mathcal{J}}=0.$ The backward iteration step
$j\rightarrow j-1$ works as follows: First generate $M$ i.i.d.~copies
$\mathcal{U}^{(m)},$ $m=1,\ldots,M$ with $\mathcal{U}^{(1)}\sim\mu.$
Simulate for $m=1,\ldots,M,$ the r.v. $X_{j}^{j-1,\mathcal{U}^{(m)}},$ and
consider the $M\times K$ matrix $\mathcal{M}^{(j)}$ defined by%
\[
\mathcal{M}_{mk}^{(j)}\coloneqq\psi_{k}\left(  \mathcal{U}^{(m)}\right)  .
\]
Define the vector $\mathcal{Y}^{(j)}\in\mathbb{R}^{M}$ by%
\begin{equation}
\mathcal{Y}_{m}^{(j)}\coloneqq\overline{v}_{j}\left(  X_{j}^{j-1,\mathcal{U}%
^{(m)}}\right)  . \label{caly}%
\end{equation}
Following~\eqref{eq:L2-regression}, the coefficients of the basis functions
are given by
\begin{equation}
\overline{\beta}^{(j)}\coloneqq\frac{1}{M}\mathcal{G}^{-1}\left(
\mathcal{M}^{(j)}\right)  ^{\top}\mathcal{Y}^{(j)} \label{pseudoreg}%
\end{equation}
and then we obtain the approximate continuation value and solution,
respectively, by
\begin{align}
&  \overline{c}_{j-1}(z)\coloneqq\sum_{k=1}^{K}\overline{\beta}_{k}^{(j)}%
\psi_{k}(z)\text{ \ \ and}\label{cont}\\
&  \overline{v}_{j-1}(z)\coloneqq\max\left(  f_{j-1}(z),\overline{c}%
_{j-1}(z)\right)  . \label{cont1}%
\end{align}
A pseudo-code representation of the algorithm is given in
Algorithm~\ref{alg:pseudo-bermudan}.

\begin{algorithm}
\DontPrintSemicolon
\KwData{$\mu, M, \psi_1, \ldots, \psi_K, \mathcal{G}, f_0, \ldots, f_{\mathcal{J}}$.}
\KwResult{Value function $\overline{v}_j$ and continuation value $\overline{c}_j$,
$j=0, \ldots, \mathcal{J}$.}
\Begin{
$\overline{v}_{\mathcal{J}} \longleftarrow v_{\mathcal{J}} =
f_{\mathcal{J}}$\;
$\overline{c}_{\mathcal{J}} \longleftarrow c_{\mathcal{J}} = 0$\;
\For{$m \longleftarrow 1$ \KwTo $M$}{
Generate $\mathcal{U}^{(m)} \sim \mu$\;
}
$\mathcal{M} \longleftarrow \left( \psi_k(\mathcal{U}^{(m)})
\right)_{\substack{m=1, \ldots, M\\ k=1, \ldots, K}} \in \mathbb{R}^{M\times K}$\;
\For{$j \longleftarrow \mathcal{J}$ \KwTo $1$}{
\For{$m \longleftarrow 1$ \KwTo $M$}{
Generate $X_j^{j-1,\mathcal{U}^{(m)}}$\;
\tcp{These r.v. are understood to be independent conditional $\mathcal{U}^{(m)}$}
}
$\mathcal{Y}^{(j)} \longleftarrow \left( \overline{v}_j\left(
X^{j-1,\mathcal{U}^{(m)}}_j \right) \right) _{m=1, \ldots, M} \in
\mathbb{R}^M$\;
$\overline{\beta}^{(j)} \longleftarrow \frac{1}{M} \mathcal{G}^{-1}
\mathcal{M}^\top \mathcal{Y}^{(j)}$\;
${\displaystyle \overline{c}_{j-1}(\cdot) \longleftarrow \sum_{k=1}^K
\overline{\beta}^{(j)}_k \psi_k(\cdot)}$\;
${\displaystyle \overline{v}_{j-1}(\cdot) \longleftarrow \max\left( f_{j-1}(\cdot), \,
\overline{c}_{j-1}(\cdot) \right)}$ \;
}
}
\caption{Pseudo regression variant of TV for Bermudan options}\label{alg:pseudo-bermudan}
\end{algorithm}

The pseudo regression algorithm for Bermudan options is related to the
well-known Tsitsiklis--van Roy algorithm (see \cite{TV2001}), but differs
essentially because of the pseudo regression step (\ref{pseudoreg}). In
contrast, Tsitsiklis--van Roy compute the coefficients (\ref{pseudoreg}) by
using standard global regression. Another striking difference is that in
Algorithm \ref{alg:pseudo-bermudan} the basis functions $\psi_{k}$ have to be
evaluated much less times, since only one sample of $M$ drawings from the
distribution $\mu$ serves for all exercise dates. The merits of standard
versus pseudo regression in a general setting are explained and discussed in
detail in Section~\ref{gvps} below.

\subsection{Pseudo regression variant of Longstaff--Schwartz}

\label{sec:pseudo-regr-LS}

In order to obtain a pseudo regression variant of the Longstaff--Schwartz
algorithm we modify the backward construction of the approximative
continuation functions $\overline{c}_{j},$ $j=\mathcal{J},...,0$ (initialized
with $\overline{c}_{\mathcal{J}}=0$ again) in the following way. Let us assume
that $\overline{c}_{j},...,\overline{c}_{\mathcal{J}}$ are constructed.
Simulate for $m=1,...,M$ at time $j-1$ the trajectory
\begin{equation}
X_{r}^{j-1,\mathcal{U}^{(m)}},\text{ \ \ }r=j,...,\mathcal{J}, \label{sh}%
\end{equation}
and modify (\ref{caly}) to%
\begin{align}
\mathcal{Y}_{m}^{(j)}\coloneqq  &  f_{\tau}(X_{\tau}^{j-1,\mathcal{U}^{(m)}%
}),\text{ \ \ where}\label{LSc}\\
\tau &  \equiv\min\left\{  r:r\geq j,\text{ }f_{r}(X_{r}^{j-1,\mathcal{U}%
^{(m)}})\geq\overline{c}_{r}\left(  X_{r}^{j-1,\mathcal{U}^{(m)}}\right)
\right\}  .\nonumber
\end{align}
Then compute (\ref{pseudoreg}) and set $\overline{c}_{j-1}(z)$ according to
(\ref{cont}). The corresponding modification of Algorithm
\ref{alg:pseudo-bermudan} is obvious.\smallskip

At the first glance this procedure is significantly more costly. However, if
the chain $X$ is autonomous, which we may assume w.l.o.g. in fact, we simulate
first%
\[
X_{r}^{0,\mathcal{U}^{(m)}},\text{ \ \ }r=0,...,\mathcal{J},
\]
and then take in (\ref{sh})%
\begin{align}
\label{shifttime}X_{r}^{j-1,\mathcal{U}^{(m)}}=X_{r-j+1}^{0,\mathcal{U}^{(m)}%
},\text{ \ \ }r=j,...,\mathcal{J}.
\end{align}
So, for the autonomous case, one set of full trajectories, just as in the
standard LS algorithm, is sufficient for this algorithm as well.

\section{Accuracy analysis of pseudo regression}

\label{gvps}

In the next section we analyze an alternative and potentially more efficient
pseudo regression procedure for computing $\mathbb{E}\left[  Y|X\right]  ,$
i.e. (\ref{form}), given that we may sample $Y$ from its conditional
distribution given $X$ (although we generally do not know $\mathbb{E}\left[
Y|X\right]  $ explicitly of course).

\subsection{A general framework}

\ Suppose that in (\ref{form}) it is possible to sample $Y$ from its
conditional distribution given $X,$ say $\nu\left(  dy|X\right)  .$ A
canonical example is the setup in Section \ref{Mark} where%
\[
X=X_{j}^{0,x}\text{ \ \ and \ \ }Y=g\left(  X_{j+1}^{0,x}\right)  =g\left(
X_{j+1}^{j,X_{j}^{0,x}}\right)  ,
\]
for some arbitrary $x.$ Let us consider a random variable $\mathcal{U}$ with
values in some domain $\mathcal{D}\subset\mathbb{R}^{d},$ distributed
according to some probability measure $\mu(dz)$ concentrated on $\mathcal{D}.$
We then generate i.i.d. copies $\mathcal{U}^{(m)},$ $m=1,...,M$ of
$\mathcal{U},$ and sample for each $m=1,...,M,$ independently $Y^{(m)}$ from
$\nu\left(  dy|\mathcal{U}^{(m)}\right)  .$ Then define the vector
$\mathcal{Y\in}\mathbb{R}^{M}$ as%
\[
\mathcal{Y}:=\left[  Y^{(1)},...,Y^{(M)}\right]  ^{\top}.
\]
Now for a linearly independent system $\left(  \psi_{k}:k=1,2,...\right)  ,$
with%
\[
\int\psi_{k}^{2}(z)\mu(dz)<\infty,
\]
consider the $M\times K$ matrix%
\[
\mathcal{M}_{mk}:=\psi_{k}\left(  \mathcal{U}^{(m)}\right)  .
\]
Assuming that we know explicitly the matrix $\mathcal{G}$ defined by the
scalar products $\mathcal{G}_{kl}:=\langle\psi_{k},\psi_{l}\rangle$ (cf.
(\ref{gkl})), we now compute the pseudo regression coefficients
\begin{equation}
\overline{\beta}=\frac{1}{M}\mathcal{G}^{-1}\mathcal{M}^{\top}\mathcal{Y},
\label{bt}%
\end{equation}
and consider the pseudo regression approximation%
\begin{equation}
\overline{u}(z)=\sum_{k=1}^{K}\overline{\beta}_{k}\psi_{k}\left(  z\right)
\approx\mathbb{E}\left[  Y\,|\,\mathcal{U}=z\right]  ,\text{ \ \ }%
z\in\mathcal{D}. \label{ut}%
\end{equation}

\bigskip Clearly, the difference with standard regression is that the random
matrix $\frac{1}{M}\mathcal{N}^{\top}\mathcal{N}$ in (\ref{bh}) is replaced by
$\mathcal{G}$ in view of (\ref{spe}). In general $\mathcal{G}^{-1}$ can be
pre-computed outside the Monte Carlo simulation with arbitrary accuracy or is
explicitly known due to a suitable choice of the system $\left(  \psi
_{k}:k=1,2,...\right)  $ and the measure $\mu.$ So the computation of
(\ref{bt}) only involves $KM$ elementary operations and no random matrix
inversion is required. Moreover, naturally, we may assume w.l.o.g. that the
system $\left(  \psi_{k}:k=1,2,...\right)  $ is an orthonormal system with
respect to $L_{2}\left(  \mathcal{D},\mu\right)  $ and then\ (\ref{bt})
simplifies to%
\[
\overline{\beta}=\frac{1}{M}\mathcal{N}^{\top}\mathcal{Y}.
\]

\subsection{Accuracy analysis of the regression}

For the convergence properties of the pseudo-regression method we could
basically refer to \cite{Ank2017,BBRRS18}, where pseudo regression is applied
in the context of global solutions for random PDEs. For the convenience of the
reader, however, let us here recap the analysis in condensed form, consistent
with the present terminology and a somewhat less involved setup.

\begin{theorem}
\label{psth} (Accuracy pseudo regression) Suppose that in (\ref{form})%
\begin{gather*}
\left\vert u(z)\right\vert \leq D\text{ \ \ and \ \ }\operatorname{Var}\left[
Y\,|\,X=z\right]  <\sigma^{2},\text{ \ \ for all }z\in\mathcal{D},\\
0<\underline{\lambda_{\min}}\leq\lambda_{\min}\left(  \mathcal{G}^{K}\right)
\leq\lambda_{\max}\left(  \mathcal{G}^{K}\right)  \leq\overline{\lambda_{\max
}},\text{ \ \ for all }K=1,2,...,
\end{gather*}
where $\lambda_{\min}\left(  \mathcal{G}^{K}\right)  ,$ and $\lambda_{\max
}\left(  \mathcal{G}^{K}\right)  ,$ denote the smallest, respectively largest,
eigenvalue of the positive symmetric matrix $\mathcal{G}.$ Then it holds,%
\begin{align}
&  \mathbb{E}\int_{\mathcal{D}}\left\vert \overline{u}(z)-u(z)\right\vert
^{2}\mu(dz)\label{th1}\\
&  \leq\frac{\overline{\lambda_{\max}}}{\underline{\lambda_{\min}}}\left(
\sigma^{2}+D^{2}\right)  \frac{K}{M}+\underset{w\,\in\,\spn \{\psi
_{1},...,\psi_{K}\}}{\inf}\int_{\mathcal{D}}\left\vert w(z)-u(z)\right\vert
^{2}\mu(dz).\nonumber
\end{align}

\end{theorem}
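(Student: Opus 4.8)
The plan is to split the mean-square error into an estimation (variance) part and an approximation (bias) part by an orthogonal decomposition in $L^{2}(\mu)$, and then to control the estimation part through a trace computation for the covariance of an i.i.d.\ average, letting the three hypotheses enter exactly once each.

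First I would introduce the $L^{2}(\mu)$-orthogonal projection $u^{K}=\sum_{k=1}^{K}\beta_{k}^{\ast}\psi_{k}$ of $u$ onto $\spn\{\psi_{1},\ldots,\psi_{K}\}$, whose coefficient vector solves the normal equations $\mathcal{G}\beta^{\ast}=b$ with $b_{k}:=\langle u,\psi_{k}\rangle=\int u\,\psi_{k}\,\mu(dz)$. Since $\overline{u}-u^{K}$ lies in the span while $u^{K}-u$ is $L^{2}(\mu)$-orthogonal to it, the Pythagorean identity yields
\[
\int_{\mathcal{D}}\left\vert \overline{u}-u\right\vert ^{2}\mu(dz)=\int_{\mathcal{D}}\left\vert \overline{u}-u^{K}\right\vert ^{2}\mu(dz)+\int_{\mathcal{D}}\left\vert u^{K}-u\right\vert ^{2}\mu(dz),
\]
and the last term is precisely the deterministic infimum appearing in the statement (the best-approximation error). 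It therefore remains to bound the expectation of the first term.

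For the estimation term I would write the coefficient error as $\gamma:=\overline{\beta}-\beta^{\ast}$ and note that $\int_{\mathcal{D}}\left\vert \overline{u}-u^{K}\right\vert ^{2}\mu(dz)=\gamma^{\top}\mathcal{G}\gamma$ by definition of $\mathcal{G}$. Setting $\xi^{(m)}:=Y^{(m)}\psi(\mathcal{U}^{(m)})$ with $\psi:=(\psi_{1},\ldots,\psi_{K})^{\top}$, we have $\overline{\beta}=\mathcal{G}^{-1}\frac{1}{M}\sum_{m}\xi^{(m)}$, and by the tower property $\mathbb{E}[\xi^{(m)}]=\mathbb{E}[u(\mathcal{U})\psi(\mathcal{U})]=b=\mathcal{G}\beta^{\ast}$. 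Hence $\gamma=\mathcal{G}^{-1}S$ with $S:=\frac{1}{M}\sum_{m}(\xi^{(m)}-b)$ a centred i.i.d.\ average of covariance $\frac{1}{M}\operatorname{Cov}(\xi)$, so that $\gamma^{\top}\mathcal{G}\gamma=S^{\top}\mathcal{G}^{-1}S$ and
\[
\mathbb{E}\left[\gamma^{\top}\mathcal{G}\gamma\right]=\frac{1}{M}\operatorname{tr}\left(\mathcal{G}^{-1}\operatorname{Cov}(\xi)\right).
\]

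The main technical step, and the place where all hypotheses are used, is bounding this trace. I would apply $\operatorname{tr}(\mathcal{G}^{-1}\operatorname{Cov}(\xi))\leq\lambda_{\max}(\mathcal{G}^{-1})\operatorname{tr}(\operatorname{Cov}(\xi))\leq\underline{\lambda_{\min}}^{-1}\,\mathbb{E}\left\vert \xi\right\vert ^{2}$, and then condition on $\mathcal{U}$: since $\mathbb{E}[Y^{2}\,|\,\mathcal{U}=z]=\Var[Y\,|\,X=z]+u(z)^{2}\leq\sigma^{2}+D^{2}$, one gets $\mathbb{E}\left\vert \xi\right\vert ^{2}=\mathbb{E}[\mathbb{E}[Y^{2}\,|\,\mathcal{U}]\,\left\vert \psi(\mathcal{U})\right\vert ^{2}]\leq(\sigma^{2}+D^{2})\sum_{k}\mathbb{E}[\psi_{k}(\mathcal{U})^{2}]=(\sigma^{2}+D^{2})\operatorname{tr}(\mathcal{G})\leq(\sigma^{2}+D^{2})K\overline{\lambda_{\max}}$. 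Combining the factors gives exactly $\frac{\overline{\lambda_{\max}}}{\underline{\lambda_{\min}}}(\sigma^{2}+D^{2})\frac{K}{M}$ for the estimation term, and adding the deterministic approximation term completes the bound. I expect the only delicate points to be the careful use of the tower property to identify $b$ with the projection data $\mathcal{G}\beta^{\ast}$ (so that the estimator is genuinely unbiased for $\beta^{\ast}$ and the cross term vanishes), and the two symmetric-PSD trace/operator-norm inequalities governed by the spectral bounds on $\mathcal{G}$; the remaining manipulations are routine.
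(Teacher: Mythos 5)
Your proposal is correct and follows essentially the same route as the paper's proof: the Pythagorean split into projection error plus estimation error, the tower-property identification of $\mathbb{E}[\frac1M\mathcal{M}^{\top}\mathcal{Y}]$ with the projection data (the paper's $\alpha$, your $b$), and the two spectral bounds via $\underline{\lambda_{\min}}$ and $\overline{\lambda_{\max}}$. The only cosmetic differences are that the paper bounds the quadratic form $S^{\top}\mathcal{G}^{-1}S$ componentwise and uses the law of total variance to get $\operatorname{Var}(\psi_{k}(\mathcal{U})Y)\leq(\sigma^{2}+D^{2})\mathcal{G}_{kk}$, whereas you phrase the same estimate as a trace inequality $\operatorname{tr}(\mathcal{G}^{-1}\operatorname{Cov}(\xi))\leq\underline{\lambda_{\min}}^{-1}\mathbb{E}\left\vert\xi\right\vert^{2}$ and bound the second moment of $Y$ directly.
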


\noindent The proof of Theorem~\ref{psth} is provided in
Appendix~\ref{sec:proof-theorem-psth}.

It is interesting to compare Theorem~\ref{psth} with a corresponding theorem
that holds for the standard regression estimate (\ref{ste}):

\begin{theorem}
\label{rgth} (Accuracy standard regression) Suppose that,%
\[
\left\vert u(x)\right\vert \leq D\text{ \ \ and \ \ }\operatorname{Var}\left[
Y\,|\,X=x\right]  <\sigma^{2},\text{ \ \ for all }x\in\mathbb{R}^{d},
\]
then for%
\[
\widetilde{u}_{D}(x)=\left\{
\begin{array}
[c]{c}%
\widetilde{u}(x)\text{ \ \ if \ \ }\left\vert \widetilde{u}(x)\right\vert \leq
D\\
D\text{ \ \ if \ \ }\widetilde{u}(x)>D\\
-D\text{ \ \ if \ \ }\widetilde{u}(x)<-D
\end{array}
\right.
\]
and some universal constant $c>0,$ it holds that%
\begin{align}
&  \mathbb{E}\int\left\vert \widetilde{u}_{D}(x)-u(x)\right\vert ^{2}\mu
_{X}(dx)\label{Gyth}\\
&  \leq c\max\left(  \sigma^{2},D^{2}\right)  \frac{\left(  1+\ln M\right)
K}{M}+8\underset{w\,\in\,\spn\{\psi_{1},...,\psi_{K}\}\,}{\inf}\int
_{\mathcal{D}}\left\vert w(x)-u(x)\right\vert ^{2}\mu_{X}(dx),\nonumber
\end{align}
where $\mu_{X}$ denotes the distribution of $X$ in (\ref{form}).
\end{theorem}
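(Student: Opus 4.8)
The plan is to recognise that Theorem~\ref{rgth} is a standard result from the theory of nonparametric least-squares regression over a finite-dimensional linear sieve, and to follow the machinery of \cite{Gy2002} for truncated least-squares estimates (of which this statement is essentially a special case, so a direct citation is the most economical route). First I would fix notation: write $\mathcal{F} := \spn\{\psi_1,\ldots,\psi_K\}$, let $\widetilde{u}$ denote the untruncated least-squares estimator $\widehat{u}$ of~(\ref{ste}) so that $\widetilde{u}_D = T_D\widetilde{u}$ with $T_D$ the truncation to $[-D,D]$, and introduce the empirical norm $\|f\|_M^2 := \frac{1}{M}\sum_{m=1}^M f(X^{(m)})^2$ alongside the population norm $\|f\|^2 := \int |f|^2\,d\mu_X$. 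Set $\varepsilon^{(m)} := Y^{(m)} - u(X^{(m)})$, so by assumption $\mathbb{E}[\varepsilon^{(m)}\mid X^{(m)}] = 0$ and $\mathbb{E}[(\varepsilon^{(m)})^2\mid X^{(m)}] < \sigma^2$.

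The first main step is the error decomposition that separates estimation from approximation. Expanding $Y = u(X)+\varepsilon$ in the empirical least-squares functional and using the optimality of $\widehat{\beta}$ over $\mathcal{F}$ yields, for every fixed $w^\ast \in \mathcal{F}$, the bound $\|\widehat{u}-u\|_M^2 \le \|w^\ast-u\|_M^2 + \frac{2}{M}\sum_m \varepsilon^{(m)}(\widehat{u}-w^\ast)(X^{(m)})$; since $|u|\le D$, truncation is a pointwise contraction toward $[-D,D]$ and hence $\|\widetilde{u}_D - u\|_M^2 \le \|\widehat{u}-u\|_M^2$, giving the schematic inequality
\begin{equation*}
\|\widetilde{u}_D - u\|_M^2 \;\le\; \|w^\ast - u\|_M^2 + \frac{2}{M}\sum_{m=1}^M \varepsilon^{(m)}\,(\widehat{u} - w^\ast)(X^{(m)}).
\end{equation*}
Choosing $w^\ast$ to (nearly) attain the population approximation error $\inf_{w\in\mathcal{F}}\|w-u\|^2$, one then has to pass from $\|\cdot\|_M$ to $\|\cdot\|$ on the left-hand side (a uniform deviation over the class $T_D\mathcal{F}$) and use $\mathbb{E}\|w^\ast-u\|_M^2 = \|w^\ast-u\|^2$ on the right. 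The factors of $2$ that accumulate through these empirical-to-population passages are precisely what combine into the constant $8$ in front of the approximation term, while the metric entropy of $T_D\mathcal{F}$ injected in the deviation inequality is what produces the $\ln M$ penalty.

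The technical heart — and the step I expect to be the main obstacle — is the empirical-process control of the noise cross term together with the uniform comparison of $\|g-u\|_M^2$ and $\|g-u\|^2$ over $g \in T_D\mathcal{F}$. Here one exploits that the $L^2$-covering number of $T_D\mathcal{F}$ at scale $\delta$ is of order $(CD/\delta)^{cK}$, reflecting the $K$-dimensionality of $\mathcal{F}$ and the uniform bound $\|g\|_\infty \le D$ supplied by truncation. Feeding this polynomial covering bound into a Bernstein-type concentration inequality and a peeling (slicing) argument over dyadic scales of $\|g-u\|$ — equivalently, a Dudley chaining integral — yields tail bounds whose integration gives an estimation contribution of order $\max(\sigma^2,D^2)\,\frac{(1+\ln M)K}{M}$; assembling the deviation term, the noise term, and the approximation term and taking expectations produces the stated bound with a universal $c$. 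Everything outside the metric-entropy/chaining estimate is bookkeeping, and I would emphasise that the sketch makes transparent why the $\ln M$ factor here is intrinsic to the random design and hence genuinely absent in the pseudo-regression bound of Theorem~\ref{psth}, where the Gram matrix $\mathcal{G}$ is deterministic and no supremum over a function class is needed.
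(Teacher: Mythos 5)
Your proposal takes essentially the same route as the paper: the paper gives no self-contained proof of Theorem~\ref{rgth} but simply points to the truncated least-squares theory in \cite{Gy2002} (noting that it relies on uniform laws of large numbers from empirical process theory), which is exactly the citation-plus-standard-machinery argument you outline. Your sketch of that machinery --- the basic inequality from least-squares optimality, truncation as a contraction since $|u|\leq D$, covering-number bounds for the truncated $K$-dimensional class feeding a concentration/peeling argument that produces the $(1+\ln M)$ factor and the constant $8$ --- is an accurate account of the cited proof, so no gap.
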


The proof of Theorem~\ref{rgth} is much more complicated than the proof of
Theorem~\ref{psth} and relies heavily on uniform laws of large numbers from
the theory of empirical processes. For details see \cite{Gy2002}.

\subsection{Convergence of the pseudo LS and pseudo TV algorithm}

In this section we investigate the convergence properties of the pseudo
Longstaff--Schwartz and pseudo Tsitsiklis--van Roy algorithm. Let us first
consider the pseudo LS method which is the more involved one in fact. We
follow similar lines as in \cite{Z} and in \cite{BelSchBMV} on optimal
stopping in the context of interacting particle systems. More specifically, we
consider the algorithm based on (\ref{sh}), where for every exercise date the
sample (\ref{sh}) is simulated independently, and consider the information set%
\[
\mathcal{G}_{j}:=\sigma\left\{  \mathbf{X}^{j;M},\ldots,\mathbf{X}%
^{\mathcal{J}-1;M}\right\}  \text{ with }\mathbf{X}^{j;M}:=\left(
X_{r}^{j,\mathcal{U}^{(m)},m},\text{ \ \ }r=j,...,\mathcal{J},\text{
}m=1,...,M\right)  .
\]
Let us define for a generic dummy trajectory $\left(  X_{l}\right)
_{l=0,\ldots,\mathcal{J}}$ corresponding to the (exact) solution independent
of $\mathcal{G}_{j},$
\begin{equation}
\widetilde{c}_{j}(x):=\mathbb{E}_{\mathcal{G}_{j+1}}\left[  \left.
f_{\overline{\tau}_{j+1}}\left(  X_{\overline{\tau}_{j+1}}\right)  \right\vert
X_{j}=x\right]  , \label{ctil}%
\end{equation}
where $\overline{\tau}_{\mathcal{J}}=\mathcal{J},$ and%
\[
\overline{\tau}_{j}:=j\,1_{\bigl\{f_{j}(X_{j})\geq\overline{c}_{j}%
(X_{j})\bigr\}}+\overline{\tau}_{j+1}1_{\bigl\{f_{j}(X_{j})<\overline{c}%
_{j}(X_{j})\bigr\}}.
\]
It is important to note that, in (\ref{ctil}), $\widetilde{c}_{j}\left(
\cdot\right)  $ is a $\mathcal{G}_{j+1}$-measurable random function while the
estimation $\overline{c}_{j}\left(  \cdot\right)  $ is a $\mathcal{G}_{j}%
$-measurable one as the construction of $\overline{c}_{j}$ also depends on
$\mathbf{X}^{j;M},$ see (\ref{cont}) connected with (\ref{LSc}). After
proceeding backwardly from $j=\mathcal{J}$ down to $j=1,$ we thus have a
sequence of approximative continuation functions $\overline{c}_{j}\left(
\cdot\right)  ,$ and a sequence of corresponding conditional expectations
$\widetilde{c}_{j}\left(  \cdot\right)  .$ The convergence analysis for the
pseudo LS method is based on the following lemma (cf. Lemma~5 in
\cite{BelSchBMV}).

\begin{lemma}
\label{lem23} For the conditional expectations (\ref{ctil}) we have that,%
\begin{equation}
\left\Vert \widetilde{c}_{j}-c_{j}\right\Vert _{L_{p}(\mu)}\leq\sum
_{l=j+1}^{\mathcal{J}-1}\left\Vert \overline{c}_{l}-c_{l}\right\Vert
_{L_{p}(\mu_{j,l})} \label{eq:bound_1}%
\end{equation}
with $p\geq1,$ $\mu_{j,l}$ being the distribution of $X_{l}^{j,\mathcal{U}},$
$1\leq j\leq l\leq\mathcal{J},$ $\mathcal{U}\sim\mu,$ and $c_{j},$ being the
true continuation functions.
\end{lemma}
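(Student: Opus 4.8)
The plan is to reduce the statement to a single-step recursion in $j$ and then telescope, so that essentially all of the estimation is concentrated in one local bound. First I would work $\mathcal{G}_{j+1}$-almost surely, which lets me treat $\overline{c}_{j+1},\dots,\overline{c}_{\mathcal{J}}$ as fixed deterministic functions (the dummy trajectory $(X_l)$ being independent of $\mathcal{G}$). Abbreviating $a=f_{j+1}(X_{j+1})$, $c=c_{j+1}(X_{j+1})$, $\overline{c}=\overline{c}_{j+1}(X_{j+1})$, $\widetilde{c}=\widetilde{c}_{j+1}(X_{j+1})$, I would derive two parallel one-step identities from the Bellman principle and the tower property: on the one hand $c_j(X_j)=\mathbb{E}[\,a\mathbf{1}_{\{a\ge c\}}+c\,\mathbf{1}_{\{a<c\}}\mid X_j]$ (since $v_{j+1}=\max(f_{j+1},c_{j+1})$), and on the other hand $\widetilde{c}_j(X_j)=\mathbb{E}[\,a\mathbf{1}_{\{a\ge\overline{c}\}}+\widetilde{c}\,\mathbf{1}_{\{a<\overline{c}\}}\mid X_j]$, obtained by unfolding $\overline{\tau}_{j+1}$ one step. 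The second identity requires checking that conditioning the continued payoff $f_{\overline{\tau}_{j+2}}(X_{\overline{\tau}_{j+2}})$ on $\mathcal{G}_{j+1}$ and on $\mathcal{G}_{j+2}$ yields the same function $\widetilde{c}_{j+1}$; this holds because the extra samples $\mathbf{X}^{j+1;M}$ contained in $\mathcal{G}_{j+1}$ affect neither the dummy trajectory nor $\overline{c}_{j+2},\dots,\overline{c}_{\mathcal{J}}$.

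Subtracting the two identities, $\widetilde{c}_j-c_j$ becomes the conditional expectation given $X_j$ of an integrand $A$ which, after adding and subtracting $c\,\mathbf{1}_{\{a<\overline{c}\}}$ and using $\mathbf{1}_{\{a<\overline{c}\}}-\mathbf{1}_{\{a<c\}}=-(\mathbf{1}_{\{a\ge\overline{c}\}}-\mathbf{1}_{\{a\ge c\}})$, I would rewrite as
\[
A=(a-c)\bigl(\mathbf{1}_{\{a\ge\overline{c}\}}-\mathbf{1}_{\{a\ge c\}}\bigr)+(\widetilde{c}-c)\,\mathbf{1}_{\{a<\overline{c}\}}.
\]
This isolates a ``decision-boundary'' term $B:=(a-c)(\mathbf{1}_{\{a\ge\overline{c}\}}-\mathbf{1}_{\{a\ge c\}})$ from a ``propagation'' term $(\widetilde{c}-c)\mathbf{1}_{\{a<\overline{c}\}}$ carrying the error forward to the next time level.

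The crux, and the only genuinely estimative step, is the sandwiching bound for $B$. A short case distinction on the sign of $\overline{c}-c$ shows that the indicator difference is supported exactly on the event that $a$ lies between $c$ and $\overline{c}$; on that event $|a-c|\le|\overline{c}-c|$, hence $|B|\le|\overline{c}_{j+1}(X_{j+1})-c_{j+1}(X_{j+1})|$ pointwise. Together with the trivial estimate on the propagation term this gives the pointwise recursion
\[
|\widetilde{c}_j(z)-c_j(z)|\le\mathbb{E}\big[\,|\overline{c}_{j+1}-c_{j+1}|(X_{j+1})+|\widetilde{c}_{j+1}-c_{j+1}|(X_{j+1})\,\big|\,X_j=z\big].
\]

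Finally I would pass to $L_p$-norms. Applying Jensen's inequality to the conditional expectation, integrating in $z$ against $\mu_{j,l}$ at level $l$, and invoking the Markov flow identity $\int\mathbb{E}[h(X_{l+1})\mid X_l=\cdot]\,d\mu_{j,l}=\int h\,d\mu_{j,l+1}$ to advance the base measure, Minkowski's inequality yields $\|\widetilde{c}_l-c_l\|_{L_p(\mu_{j,l})}\le\|\overline{c}_{l+1}-c_{l+1}\|_{L_p(\mu_{j,l+1})}+\|\widetilde{c}_{l+1}-c_{l+1}\|_{L_p(\mu_{j,l+1})}$. Iterating from $l=j$ (where $\mu_{j,j}=\mu$) produces a telescoping sum, and the terminal data $\overline{c}_{\mathcal{J}}=c_{\mathcal{J}}=\widetilde{c}_{\mathcal{J}}=0$ annihilate the last term, collapsing the bound to $\sum_{l=j+1}^{\mathcal{J}-1}\|\overline{c}_l-c_l\|_{L_p(\mu_{j,l})}$, as claimed. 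I expect the main obstacle to be bookkeeping rather than depth: beyond the sandwiching estimate, the delicate point is ensuring that the base measure updates correctly from $\mu_{j,l}$ to $\mu_{j,l+1}$ at every telescoping step, which is precisely where the (autonomous) Markov structure and the definition of $\mu_{j,l}$ as the law of $X_l^{j,\mathcal{U}}$ enter; the measurability juggling in the first step, while routine, must also be stated carefully since $\widetilde{c}_l$ and $\overline{c}_l$ are adapted to different $\sigma$-algebras.
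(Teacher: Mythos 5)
Your proof is correct, and its core estimate is the same as the paper's, but you organize the argument along a recognizably different route. The paper introduces the optimal stopping times $\tau_j^{\ast}$ and compares the stopped payoffs $f_{\tau_{j+1}^{\ast}}(X_{\tau_{j+1}^{\ast}})$ and $f_{\overline{\tau}_{j+1}}(X_{\overline{\tau}_{j+1}})$ pathwise, splitting on the events where the two stopping rules agree or disagree at time $j+1$; its recursion runs on $\mathcal{R}_j=\mathbb{E}_{\mathcal{G}_{j+1}}\left[\left. f_{\tau_{j+1}^{\ast}}(X_{\tau_{j+1}^{\ast}})-f_{\overline{\tau}_{j+1}}(X_{\overline{\tau}_{j+1}})\right\vert X_j\right]=c_j(X_j)-\widetilde{c}_j(X_j)\geq 0$, stays entirely under the single conditioning $\mathbb{E}_{\mathcal{G}_{j+1}}$ (so the tower property collapses the iterates), and the $L_p(\mu)$-norm is taken only once at the end via the contraction (\ref{help}). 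You never introduce $\tau^{\ast}$: you work from the Bellman identity $c_j=\mathbb{E}[\max(f_{j+1},c_{j+1})\,|\,X_j]$ together with a one-step unfolding of $\overline{\tau}_{j+1}$, and your boundary term $B$ is precisely the union of the paper's two disagreement terms $T_1,T_2$, estimated by the identical sandwich argument ($a$ trapped between $c$ and $\overline{c}$ forces $|a-c|\leq|\overline{c}-c|$). Your route buys algebraic transparency and dispenses with the stopping-time case analysis, but it pays in two places where the paper's formulation gets things for free: first, you must verify that conditioning on $\mathcal{G}_{j+1}$ and on $\mathcal{G}_{j+2}$ yields the same function $\widetilde{c}_{j+1}$ — you correctly flag this, and independence of $\mathbf{X}^{j+1;M}$ from the dummy trajectory and from $\mathcal{G}_{j+2}$ is indeed the right justification; this point cannot be waved away, since $\widetilde{c}_{j+1}$ in (\ref{ctil}) is defined through $\mathcal{G}_{j+2}$, whereas the paper avoids the issue by never identifying its iterated conditional expectations with the functions $\widetilde{c}_l$ at intermediate times. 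Second, you telescope at the level of $L_p(\mu_{j,l})$-norms, advancing the base measure step by step via Jensen and the Markov flow identity taking $\mu_{j,l}$ to $\mu_{j,l+1}$, rather than telescoping conditional expectations and applying (\ref{help}) once; both substitutes are valid. You also forgo the sign information $c_j\geq\widetilde{c}_j$, which the paper obtains from suboptimality of $\overline{\tau}$, but this is harmless since only absolute values enter (\ref{eq:bound_1}).
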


The proof is almost identical with the proof of the similar Lemma~5 in
\cite{BelSchBMV}. For the convenience of the reader, it is given in
Appendix~\ref{sec:proof-lemma-reflem23}.

\begin{remark}
Note that the inequality (\ref{eq:bound_1}) involves $\mathcal{G}_{j+1}%
$-measurable objects. It is also interesting to compare (\ref{eq:bound_1})
with similar (though different) inequalities in \cite{Z}.
\end{remark}

We now state our convergence theorem connected with the pseudo
Longstaff--Schwartz algorithm. The proof is given in
Appendix~\ref{sec:proof-theorem-refeq}.

\begin{theorem}
\label{thm_main} Let us assume that the conditions of Theorem \ref{psth} are
fulfilled. In particular, we assume that the cash-flows $f_{j}$ are uniformly
bounded, i.e. $0\leq f_{j}\leq D$ for $j=0,\ldots,\mathcal{J},$ and some
$D>0.$ Since then, as a consequence, also $0\leq c_{j}\leq D,$ we may moreover
assume that $0\leq\overline{c}_{j}\leq D,$ for $j=0,\ldots,\mathcal{J}.$ Let
us further assume, somewhat more general, sampling densities $\mu_{j}$ that
may depend on $j$ in the Longstaff-Schwartz version of Algorithm
\ref{alg:pseudo-bermudan}, which moreover satisfy $\mu_{j}>0$ for
$j=0,\ldots,\mathcal{J}-1,$ and%
\[
\mathcal{R}_{\infty}:=\max_{0\leq j<l<\mathcal{J}}\sup_{x\in\mathbb{R}^{d}%
}\frac{\mu_{j,l}(x)}{\mu_{l}(x)}<\infty,
\]
where $X_{l}^{j,\mathcal{U}_{j}}\sim\mu_{j,l}.$ For a generic measure $\nu,$
the norm%
\[
\left\Vert \cdot\right\Vert _{L_{2}(\nu\otimes\mathbb{P})}^{2}:=\mathbb{E}%
\left[  \left\Vert \cdot\right\Vert _{L_{2}(\nu)}^{2}\right]  ,
\]
is defined due to the unconditional expectation with respect to the
\textquotedblleft all in\textquotedblright\ probability measure $\mathbb{P}.$
One then has for natural numbers $K,M,$ and $j=0,\ldots,\mathcal{J}-1,$%
\begin{align}
&  \left\Vert \overline{c}_{j}-c_{j}\right\Vert _{L_{2}(\mu_{j}\otimes
\mathbb{P})}\nonumber\\
&  \leq\eta\varepsilon_{j,M,K}\left(  1+\mathcal{R}_{\infty}^{1/2}\left(
\eta+1\right)  \right)  ^{\mathcal{J}-j-1},\nonumber\\
\text{where}\qquad\varepsilon_{j,M,K}  &  :=\sqrt{\frac{K}{M}}+\max_{j\leq
l<\mathcal{J}}\underset{w\,\in\,\text{\textsf{span}}\{\psi_{1},...,\psi_{K}%
\}}{\inf}\left\Vert c_{l}-w\right\Vert _{L_{2}(\mu_{j})}, \label{eq: main}%
\end{align}
$\mathcal{U}_{j}\sim\mu_{j},$ $0\leq j<\mathcal{J},$ and $\eta>0$ is some
constant not depending on $K,M,R,$ and the choice of the densities $\mu_{j}.$
\end{theorem}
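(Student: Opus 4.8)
The plan is to control the total error $e_j := \|\overline c_j - c_j\|_{L_2(\mu_j\otimes\mathbb P)}$ by splitting it, via the triangle inequality, into a \emph{one-step regression error} and a \emph{propagation error}, mediated by the auxiliary $\mathcal G_{j+1}$-measurable function $\widetilde c_j$ of \eqref{ctil}:
\[
\|\overline c_j - c_j\|_{L_2(\mu_j\otimes\mathbb P)} \leq \|\overline c_j - \widetilde c_j\|_{L_2(\mu_j\otimes\mathbb P)} + \|\widetilde c_j - c_j\|_{L_2(\mu_j\otimes\mathbb P)}.
\]
The reason for inserting $\widetilde c_j$ is that, conditionally on $\mathcal G_{j+1}$ (so that the stopping rule $\overline\tau_{j+1}$ is frozen), $\overline c_j$ is \emph{exactly} the pseudo-regression estimate of $\widetilde c_j$ built from the fresh sample $\mathbf X^{j;M}$. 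The independence of $\mathbf X^{j;M}$ from $\mathcal G_{j+1}$ is precisely what turns the first term into a genuine one-step pseudo-regression error amenable to Theorem~\ref{psth}.

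For the regression error I would apply Theorem~\ref{psth} conditionally on $\mathcal G_{j+1}$, with $u = \widetilde c_j$ and $Y = f_{\overline\tau_{j+1}}(X_{\overline\tau_{j+1}})$. Since $0\leq f_l\leq D$ forces $|\widetilde c_j|\leq D$ and $\Var[Y\,|\,\mathcal U_j = z]\leq D^2$, both constants in Theorem~\ref{psth} are of order $D^2$, while $\overline{\lambda_{\max}}/\underline{\lambda_{\min}}$ is fixed by hypothesis. This gives
\[
\E_{\mathcal G_{j+1}}\|\overline c_j - \widetilde c_j\|_{L_2(\mu_j)}^2 \leq C\,\frac{K}{M} + \inf_{w\in\spn\{\psi_1,\ldots,\psi_K\}}\|w - \widetilde c_j\|_{L_2(\mu_j)}^2.
\]
As $\widetilde c_j$ is random, I would replace its best-approximation error by that of the deterministic $c_j$, paying an extra $\|\widetilde c_j - c_j\|$ term ($\inf_w\|w-\widetilde c_j\|\leq\inf_w\|w-c_j\|+\|\widetilde c_j - c_j\|$ by the triangle inequality), take the unconditional expectation, and bound $\inf_w\|w-c_j\|_{L_2(\mu_j)}\leq\max_{j\leq l<\mathcal J}\inf_w\|c_l-w\|_{L_2(\mu_j)}$. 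After taking square roots this contributes $\lesssim \varepsilon_{j,M,K}+\|\widetilde c_j-c_j\|_{L_2(\mu_j\otimes\mathbb P)}$.

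For the propagation error I would invoke Lemma~\ref{lem23} with $p=2$ and left-hand measure $\mu_j$, yielding the pathwise ($\mathcal G_{j+1}$-measurable) estimate $\|\widetilde c_j - c_j\|_{L_2(\mu_j)}\leq\sum_{l=j+1}^{\mathcal J-1}\|\overline c_l-c_l\|_{L_2(\mu_{j,l})}$. Taking $L_2(\mathbb P)$-norms and applying Minkowski's inequality preserves the sum, and the density-ratio hypothesis $\mu_{j,l}/\mu_l\leq\mathcal R_\infty$ converts each $\mu_{j,l}$-norm into a $\mu_l$-norm, giving $\|\widetilde c_j-c_j\|_{L_2(\mu_j\otimes\mathbb P)}\leq\mathcal R_\infty^{1/2}\sum_{l=j+1}^{\mathcal J-1}e_l$. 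Combining the two bounds produces the linear recursion
\[
e_j \leq \eta\,\varepsilon_{j,M,K} + \mathcal R_\infty^{1/2}(\eta+1)\sum_{l=j+1}^{\mathcal J-1} e_l,
\]
for a suitable constant $\eta$ absorbing $C$ and the fixed numerical factors.

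Finally I would solve this recursion by backward induction on $j$. Setting $B:=\mathcal R_\infty^{1/2}(\eta+1)$, summing the recursion and telescoping (a discrete Gronwall argument) gives $e_j\leq\eta\varepsilon_{j,M,K}(1+B)^{\mathcal J-j-1}$, which is \eqref{eq: main}; the base case $j=\mathcal J-1$ is immediate since the sum is empty and $\widetilde c_{\mathcal J-1}=c_{\mathcal J-1}$. I expect the main obstacle to lie not in this Gronwall step but in the bookkeeping preceding it: justifying the conditional use of Theorem~\ref{psth} (verifying that, given $\mathcal G_{j+1}$, the sample $\mathbf X^{j;M}$ is i.i.d.\ and independent, so freezing $\overline\tau_{j+1}$ makes the step a bona fide pseudo-regression with target $\widetilde c_j$), and keeping the three measures $\mu_j$, $\mu_l$, $\mu_{j,l}$ straight so that the single density ratio $\mathcal R_\infty$ suffices to reconcile them — the maximum over $l$ in the definition of $\varepsilon_{j,M,K}$ being exactly the device that supplies the monotonicity $\varepsilon_{l,M,K}\leq\varepsilon_{j,M,K}$ ($l\geq j$) needed for the induction to close.
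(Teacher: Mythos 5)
Your proposal is correct and follows essentially the same route as the paper's own proof: the same triangle-inequality decomposition through the $\mathcal{G}_{j+1}$-measurable $\widetilde{c}_j$, the conditional application of Theorem~\ref{psth} with the swap of $\widetilde{c}_j$ for $c_j$ in the best-approximation term, propagation via Lemma~\ref{lem23} (at $p=2$) with the $\mathcal{R}_{\infty}^{1/2}$ density-ratio conversion from $\mu_{j,l}$ to $\mu_l$, and the same backward induction turning the recursion into the factor $\left(1+\mathcal{R}_{\infty}^{1/2}(\eta+1)\right)^{\mathcal{J}-j-1}$. The two subtleties you flag at the end (conditional independence of $\mathbf{X}^{j;M}$ from $\mathcal{G}_{j+1}$, and the monotonicity $\varepsilon_{l,M,K}\leq\varepsilon_{j,M,K}$ used when summing the geometric series) are precisely the points the paper itself relies on, the latter only tacitly.
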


Let us now consider the convergence of the pseudo TV method. For a generic
exact (dummy) solution $\left(  X_{l}\right)  _{l=0,\ldots,\mathcal{J}}$
independent of $\mathcal{G}_{j},$ we now re-define (\ref{ctil}) as
\begin{equation}
\widetilde{c}_{j}(x):=\mathbb{E}_{\mathcal{G}_{j+1}}\left[  \left.
\text{\ }\overline{v}_{j+1}\left(  X_{j+1}\right)  \right\vert X_{j}=x\right]
, \label{tiltv}%
\end{equation}
where again, in (\ref{tiltv}), $\widetilde{c}_{j}\left(  \cdot\right)  $ is a
$\mathcal{G}_{j+1}$-measurable random function while the estimation
$\overline{c}_{j}\left(  \cdot\right)  $ is a $\mathcal{G}_{j}$-measurable one
that is now constructed via (\ref{cont}) and (\ref{cont1}). The convergence of
the pseudo TV method is based on the next lemma.

\begin{lemma}
\label{lem23TV} For the conditional expectations (\ref{ctil}) we have that,%
\begin{equation}
\left\Vert \widetilde{c}_{j}-c_{j}\right\Vert _{L_{p}(\mu)}\leq\left\Vert
\overline{c}_{j+1}-c_{j+1}\right\Vert _{L_{p}(\mu_{j,j+1})} \label{boundTV}%
\end{equation}
with $p\geq1,$ $\mu_{j,j+1}$ being the distribution of $X_{j+1}^{j,\mathcal{U}%
},$ $1\leq j<\mathcal{J},$ $\mathcal{U}\sim\mu,$ and $c_{j},$ being the true
continuation functions.
\end{lemma}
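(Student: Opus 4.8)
The plan is to exploit the one-step structure of the Tsitsiklis--van Roy recursion, which makes this estimate considerably simpler than the telescoping bound of Lemma~\ref{lem23}. Throughout I condition on $\mathcal{G}_{j+1}$, so that the approximate continuation function $\overline{c}_{j+1}$ is frozen as a deterministic function (it is $\mathcal{G}_{j+1}$-measurable), while the generic dummy trajectory $\left( X_l\right)$ is independent of $\mathcal{G}_{j+1}$ and evolves according to the true transition kernel. First I would recall that the Bellman principle gives $v_{j+1}=\max(f_{j+1},c_{j+1})$, while by construction~(\ref{cont1}) we have $\overline{v}_{j+1}=\max(f_{j+1},\overline{c}_{j+1})$. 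Subtracting the definition~(\ref{tiltv}) of $\widetilde{c}_j$ from $c_j(x)=\mathbb{E}\left[\left.v_{j+1}(X_{j+1})\right\vert X_j=x\right]$ and using linearity of the (conditional) expectation yields
\[
\widetilde{c}_j(x)-c_j(x)=\mathbb{E}_{\mathcal{G}_{j+1}}\left[\left.\overline{v}_{j+1}(X_{j+1})-v_{j+1}(X_{j+1})\right\vert X_j=x\right].
\]

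The key pointwise estimate is the $1$-Lipschitz property of $a\mapsto\max(b,a)$, which gives, for every $y$,
\[
\left\vert\overline{v}_{j+1}(y)-v_{j+1}(y)\right\vert=\left\vert\max(f_{j+1}(y),\overline{c}_{j+1}(y))-\max(f_{j+1}(y),c_{j+1}(y))\right\vert\leq\left\vert\overline{c}_{j+1}(y)-c_{j+1}(y)\right\vert.
\]
Combining this with Jensen's inequality applied to the convex map $t\mapsto\abs{t}^{p}$ under the conditional expectation, I obtain
\[
\abs{\widetilde{c}_j(x)-c_j(x)}^{p}\leq\mathbb{E}_{\mathcal{G}_{j+1}}\left[\left.\abs{\overline{c}_{j+1}(X_{j+1})-c_{j+1}(X_{j+1})}^{p}\right\vert X_j=x\right].
\]

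Finally I would integrate over $\mu(dx)$. Since $\mu_{j,j+1}$ is by definition the law of $X_{j+1}^{j,\mathcal{U}}$ with $\mathcal{U}\sim\mu$, the tower property---applied conditionally on $\mathcal{G}_{j+1}$, against which the dummy transition $X_j\to X_{j+1}$ is independent---identifies the integral of the right-hand side with $\|\overline{c}_{j+1}-c_{j+1}\|_{L_p(\mu_{j,j+1})}^{p}$, so that
\[
\left\Vert\widetilde{c}_j-c_j\right\Vert_{L_p(\mu)}^{p}\leq\left\Vert\overline{c}_{j+1}-c_{j+1}\right\Vert_{L_p(\mu_{j,j+1})}^{p},
\]
and taking $p$-th roots gives~(\ref{boundTV}) as an a.s.\ inequality between the $\mathcal{G}_{j+1}$-measurable random functions on both sides. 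The only point requiring care---rather than a genuine obstacle---is the bookkeeping of the conditioning: one must use that $\overline{c}_{j+1}$ is $\mathcal{G}_{j+1}$-measurable while the dummy transition is independent of $\mathcal{G}_{j+1}$, exactly the structure recorded after~(\ref{tiltv}). No telescoping appears because, unlike the Longstaff--Schwartz stopping rule, the TV value function reinjects the approximate continuation $\overline{c}_{j+1}$ in a single backward step, so the error of $\widetilde{c}_j$ is controlled by that of $\overline{c}_{j+1}$ alone.
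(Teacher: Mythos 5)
Your proof is correct and takes essentially the same route as the paper's: both rest on comparing $v_{j+1}=\max(f_{j+1},c_{j+1})$ with $\overline{v}_{j+1}=\max(f_{j+1},\overline{c}_{j+1})$, applying the $1$-Lipschitz property of the maximum pointwise, and then using conditional Jensen together with the tower property (the paper's inequality (\ref{help})) to convert the conditional bound into the $L_{p}(\mu_{j,j+1})$-norm of $\overline{c}_{j+1}-c_{j+1}$. Your bookkeeping of the $\mathcal{G}_{j+1}$-measurability of $\overline{c}_{j+1}$ versus the independence of the dummy trajectory matches the paper's argument exactly.
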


The proof is somewhat simpler than the proof of Lemma~\ref{lem23} and given in
Appendix~\ref{sec:proof-lemma-reflem23TV}. For the pseudo Tsitsiklis--van Roy
algorithm we now have the following convergence theorem, proved in
Appendix~\ref{sec:proof-theorem-refeqTV}.

\begin{theorem}
\label{thm_mainTV} Let us consider the same assumptions and notation as in
Theorem~\ref{thm_main}, but, with now $\mu_{j},$ $j=1,...,\mathcal{J},$ being
the sampling densities (generally depending on $j$) in the Tsitsiklis--van Roy
version of Algorithm \ref{alg:pseudo-bermudan}. If all $\mu_{j}>0$ and
satisfy,%
\begin{equation}
\mathcal{R}_{+}:=\max_{0\leq j<\mathcal{J}-1}\sup_{x\in\mathbb{R}^{d}}%
\frac{\mu_{j,j+1}(x)}{\mu_{j+1}(x)}<\infty, \label{max}%
\end{equation}
then for natural numbers $K,M,$ and $j=0,\ldots,\mathcal{J}-1,$%
\begin{align}
&  \left\Vert \overline{c}_{j}-c_{j}\right\Vert _{L_{2}(\mu_{j}\otimes
\mathbb{P})}\leq\eta\varepsilon_{j,M,K}\frac{\left(  \mathcal{R}_{+}%
^{1/2}\left(  \eta+1\right)  \right)  ^{\mathcal{J}-j}-1}{\mathcal{R}%
_{+}^{1/2}\left(  \eta+1\right)  -1},\label{eq: mainTV}\\
\text{where}\qquad\varepsilon_{j,M,K}  &  :=\sqrt{\frac{K}{M}}+\max_{j\leq
l<\mathcal{J}}\underset{w\,\in\,\text{\textsf{span}}\{\psi_{1},...,\psi_{K}%
\}}{\inf}\left\Vert c_{l}-w\right\Vert _{L_{2}(\mu_{j})},\nonumber
\end{align}
$\mathcal{U}_{j}\sim\mu_{j},$ $0\leq j<\mathcal{J},$ and $\eta>0$ is some
constant not depending on $K,M,R,$ and the choice of the densities $\mu_{j}.$
\end{theorem}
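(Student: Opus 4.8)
The plan is to fix the global error $\gamma_j:=\left\Vert\overline{c}_j-c_j\right\Vert_{L_2(\mu_j\otimes\mathbb{P})}$, derive a backward linear recursion of the form $\gamma_j\leq\eta\,\varepsilon_{j,M,K}+\mathcal{R}_+^{1/2}(\eta+1)\gamma_{j+1}$, and then unroll it. Everything rests on the decomposition
\[
\overline{c}_j-c_j=(\overline{c}_j-\widetilde{c}_j)+(\widetilde{c}_j-c_j),
\]
with $\widetilde{c}_j$ the $\mathcal{G}_{j+1}$-measurable conditional expectation from~(\ref{tiltv}): the first summand is the pure one-step pseudo-regression error at date $j$, while the second is the error propagated from the future through the use of $\overline{v}_{j+1}$ in place of $v_{j+1}$.

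For the one-step term I would invoke Theorem~\ref{psth} \emph{conditionally} on $\mathcal{G}_{j+1}$. Given $\mathcal{G}_{j+1}$ the function $\overline{v}_{j+1}$ is frozen and, after truncating $\overline{c}_{j+1}$ to $[0,D]$ (harmless since $0\leq c_{j+1}\leq D$), bounded by $D$; hence the sampled target $Y=\overline{v}_{j+1}(X_{j+1}^{j,\mathcal{U}_j})$ obeys $|\widetilde{c}_j|\leq D$ and $\Var[Y\mid\mathcal{U}_j=z]\leq D^2$, so Theorem~\ref{psth} applies with $u=\widetilde{c}_j$ and gives
\[
\mathbb{E}_{\mathcal{G}_{j+1}}\left\Vert\overline{c}_j-\widetilde{c}_j\right\Vert_{L_2(\mu_j)}^2\leq C^2\frac{K}{M}+\Big(\inf_{w\in\spn\{\psi_1,\ldots,\psi_K\}}\left\Vert w-\widetilde{c}_j\right\Vert_{L_2(\mu_j)}\Big)^2,
\]
where $C^2=\tfrac{\overline{\lambda_{\max}}}{\underline{\lambda_{\min}}}(\sigma^2+D^2)$. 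Because the $L_2(\mu_j)$-projection is a contraction, the best-approximation error of the \emph{random} function $\widetilde{c}_j$ is bounded by that of the \emph{deterministic} $c_j$ plus one further copy of $\left\Vert\widetilde{c}_j-c_j\right\Vert_{L_2(\mu_j)}$. Taking square roots via $\sqrt{a+b}\leq\sqrt a+\sqrt b$ and then the outer $L_2(\mathbb{P})$-norm by Minkowski's inequality, I arrive at
\[
\left\Vert\overline{c}_j-\widetilde{c}_j\right\Vert_{L_2(\mu_j\otimes\mathbb{P})}\leq C\sqrt{\tfrac{K}{M}}+\inf_{w}\left\Vert w-c_j\right\Vert_{L_2(\mu_j)}+\left\Vert\widetilde{c}_j-c_j\right\Vert_{L_2(\mu_j\otimes\mathbb{P})}.
\]

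The propagation term is handled by Lemma~\ref{lem23TV}, which gives $\left\Vert\widetilde{c}_j-c_j\right\Vert_{L_2(\mu_j)}\leq\left\Vert\overline{c}_{j+1}-c_{j+1}\right\Vert_{L_2(\mu_{j,j+1})}$; changing the measure from $\mu_{j,j+1}$ to the sampling density $\mu_{j+1}$ via the Radon--Nikodym bound~(\ref{max}) costs a factor $\mathcal{R}_+^{1/2}$, whence $\left\Vert\widetilde{c}_j-c_j\right\Vert_{L_2(\mu_j\otimes\mathbb{P})}\leq\mathcal{R}_+^{1/2}\gamma_{j+1}$. Collecting the pieces, setting $\eta:=\max(C,1)$ (which depends only on $\overline{\lambda_{\max}},\underline{\lambda_{\min}},\sigma,D$ and not on $K,M,\mathcal{R}_+$ or the $\mu_j$), bounding $\sqrt{K/M}+\inf_w\left\Vert w-c_j\right\Vert_{L_2(\mu_j)}\leq\varepsilon_{j,M,K}$, and using $2\leq\eta+1$, yields the advertised recursion. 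Its base case is $\gamma_{\mathcal{J}}=0$: since $\overline{v}_{\mathcal{J}}=v_{\mathcal{J}}$ exactly, we have $\widetilde{c}_{\mathcal{J}-1}=c_{\mathcal{J}-1}$, so the very first propagation error vanishes. Iterating and summing the geometric series in $r:=\mathcal{R}_+^{1/2}(\eta+1)$ then produces the factor $\tfrac{r^{\mathcal{J}-j}-1}{r-1}$ of~(\ref{eq: mainTV}).

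The step I expect to be most delicate is keeping the conditional and unconditional expectations cleanly separated: Theorem~\ref{psth} must be used conditionally on $\mathcal{G}_{j+1}$ so that $\overline{v}_{j+1}$ and $\widetilde{c}_j$ are genuinely deterministic, and only afterwards may the tower property and Minkowski's inequality be applied to pass to the $L_2(\mu_j\otimes\mathbb{P})$-norm. A second bookkeeping obstacle is that unrolling the recursion naturally produces per-step approximation errors measured against each step's own sampling measure $\mu_l$, $l=j,\ldots,\mathcal{J}-1$; reconciling these $j$-dependent measures with the single measure $\mu_j$ appearing inside $\varepsilon_{j,M,K}$, so that one common $\varepsilon_{j,M,K}$ (whose maximum over $j\leq l<\mathcal{J}$ is tailored precisely for this purpose) can be pulled out of the geometric sum, is where the remaining care is required.
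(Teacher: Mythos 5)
Your proposal is correct and follows essentially the same route as the paper's proof: a conditional application of Theorem~\ref{psth} given $\mathcal{G}_{j+1}$, the triangle inequality to replace $\widetilde{c}_{j}$ by $c_{j}$ in the best-approximation term, Lemma~\ref{lem23TV} combined with the Radon--Nikodym bound (\ref{max}) for the propagated error, and the resulting backward recursion (the paper phrases it as induction) summed as a geometric series. The measure-bookkeeping subtlety you flag at the end---reconciling $\varepsilon_{l,M,K}$, whose norms are taken in $L_{2}(\mu_{l})$, with the single $\varepsilon_{j,M,K}$ in $L_{2}(\mu_{j})$ when collapsing the sum---is glossed over in exactly the same way in the paper's own argument.
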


\subsection{On the choice of the measures $\mu_{l}$}

In the formulation of the above results it is assumed that the state space of
the underlying process $X$ is $\mathbb{R}^{d}$ but, naturally, the results
apply also if $X$ runs through some open subset of $\mathbb{R}^{d},$
$\mathbb{R}_{>0}^{d}$ for example.

Let us take $\mu_{l}\sim X_{l}^{0,\mathcal{U}_{0}}.$ Then we have $\mu_{j,l}$
$\sim$ $X_{l}^{j,\mathcal{U}_{j}}$ $\sim$ $X_{l}^{0,\mathcal{U}_{0}}$ $\sim$
$\mu_{l},$ and thus $\mathcal{R}_{\infty}=\mathcal{R}_{+}=1$ in
Theorem~\ref{thm_main} and Theorem~\ref{thm_mainTV}, respectively. For the
accuracy estimates we then obtain,
\begin{equation}
\left\Vert \overline{c}_{0}-c_{0}\right\Vert _{L_{2}(\mu_{0}\otimes
\mathbb{P})}\leq\eta\varepsilon_{0,M,K}\left(  \eta+2\right)  ^{\mathcal{J}%
-1}, \label{lb}%
\end{equation}
and%
\begin{equation}
\left\Vert \overline{c}_{0}-c_{0}\right\Vert _{L_{2}(\mu_{0}\otimes
\mathbb{P})}\leq\varepsilon_{0,M,K}\left(  \left(  \eta+1\right)
^{\mathcal{J}}-1\right)  , \label{lbTV}%
\end{equation}
respectively. This hypothetical choice of sampling measures shows that in
principle the accuracy bounds (\ref{lb}) and (\ref{lbTV}) are attainable for
LS and TV respectively. Actually, this touches the cardinal point in
Glasserman-Yu (2002) where, loosely speaking, one of the advices was to search
at each step for basis functions that are orthogonal with respect to the
distribution of the underlying process. However, in practice this is rarely
possible, apart from the case of an underlying (multidimensional)
Black-Scholes model. \newline

In fact, this paper proposes to go beyond Glasserman-Yu (2002): Based on
Theorems \ref{thm_main} and \ref{thm_mainTV}, we suggest to search for
\textquotedblleft suitable\textquotedblright\ densities $\mu_{0}%
,...,\mu_{\mathcal{J}-1}$ such that, on the one hand, the densities $\mu_{l}$
are in some sense close to the densities of $X_{l}^{0,\mathcal{U}_{0}},$ for
$l=0,...,\mathcal{J},$ such that $\mathcal{R}_{\infty}<\infty,$ and on the
other hand, they need to be such that for each exercise date $l,$ a
$K$-dimensional system of basis functions $\Psi_{l}$ is available such that
$\langle\psi,\psi^{\prime}\rangle$ is known for all $\psi,\psi^{\prime}\in
\Psi_{l},$ or even better $\langle\psi,\psi^{\prime}\rangle=\delta_{\psi
\psi^{\prime}}.$

Let us choose the $\mu_{l}$ such that%
\[
\sup_{x\in\mathbb{R}^{d}}\frac{\mu_{l-1,l}(x)}{\mu_{l}(x)}=\sup_{x\in
\mathbb{R}^{d}}\frac{\int\mu_{l-1}(x_{l})p_{l-1,l}(x_{l-1},x)dx_{l-1}}{\mu
_{l}(x)}\leq\mathcal{R}_{l},\text{ \ \ }l=0,...,\mathcal{J}-1,
\]
where we denote by $p_{l,l^{\prime}}(x_{l},x_{l^{\prime}})$ the density of
$X_{l^{\prime}}^{0,x_{l}}$ for $0$ $\leq$ $l<$ $l^{\prime}$ $<$ $\mathcal{J}.$
We then have that,%
\begin{align*}
\mu_{j,l}(x)  &  =\int dx_{j}\mu_{j}(x_{j})p_{j,l}(x_{j},x_{l})\\
&  =\int dx_{j+1}p_{j+1,l}(x_{j+1},x_{l})\int dx_{j}\mu_{j}(x_{j}%
)p_{j,j+1}(x_{j},x_{j+1})\\
&  \leq\mathcal{R}_{j+1}\int dx_{j+1}\mu_{j+1}(x_{j+1})p_{j+1,l}(x_{j+1}%
,x_{l})\\
&  \leq...\leq\mathcal{R}_{j+1}\mathcal{R}_{j+2}\cdot\cdot\cdot\mathcal{R}%
_{l}\mu_{l}(x),
\end{align*}
and so we may take%
\[
\mathcal{R}_{\infty}:=\max_{0\leq j\leq l<\mathcal{J}}\mathcal{R}%
_{j+1}\mathcal{R}_{j+2}\cdot\cdot\cdot\mathcal{R}_{l}=\mathcal{R}%
_{0}\mathcal{R}_{1}\cdot\cdot\cdot\mathcal{R}_{\mathcal{J}-1}<\infty,
\]
and $\mathcal{R}_{+}<\infty$ in Theorem~\ref{thm_main} and
Theorem~\ref{thm_mainTV}, respectively.

\begin{example}
Let $X$ be given by an It\^{o} diffusion with state space $\mathbb{R}_{+}%
^{d},$ and let $q_{l,l^{\prime}}(y_{l},y_{l^{\prime}})$ be the density of the
(log-price) process $L_{l},$ $l=0,...,\mathcal{J},$ defined by%
\[
L_{l}:=\ln\left[  X_{l}\right]  \text{ \ \ with \ \ }\ln\left[  x\right]
:=\left[  \ln(x^{i})\right]  _{i=1,...,d}\text{ \ for \ }x\in\mathbb{R}%
_{+}^{d}.
\]
Suppose that $q_{l-1,l}(y_{l},y_{l^{\prime}})$ is sub-Gaussian with some
(possibly complicated), correlation structure. A typical situation would be
the case where
\begin{multline}
q_{l-1,l}(y_{l-1},y_{l})\leq\\
\mathcal{R}_{l}\frac{1}{\sqrt{\left(  2\pi\right)  ^{d}\det(\Sigma^{(l)})}%
}\exp\left[  -\frac{1}{2}(y_{l-1}-y_{l})^{T}\left(  \Sigma^{(l)}\right)
^{-1}(y_{l-1}-y_{l})\right]  \\
=:\mathcal{R}_{l}\widehat{q}_{l-1,l}(y_{l-1},y_{l}),\label{qr}%
\end{multline}
for a simple covariance matrix $\Sigma^{(l)},$ for instance a diagonal matrix.
Let $\mu_{j}$ be the density of the sampling random variable $\mathcal{U}_{j}$
$\in\mathbb{R}_{+}^{d},$ given by%
\[
\mathcal{U}_{j}:=\exp\left[  \mathcal{L}_{j}\right]  ,\text{ \ \ with \ }%
\exp\left[  y\right]  :=\left[  \exp(y^{i})\right]  _{i=1,...,d}\text{ \ for
\ }y\in\mathbb{R}^{d},
\]
where $\mathcal{L}_{j}$ is sampled from a density $\upsilon_{j}.$ For an
arbitrary nonnegative Borel function $f$ on $\mathbb{R}_{+}^{d}$ one has%
\[
\int f(x)\mu_{j}(x)dx=\int f(\exp[y])\upsilon_{j}(y)dy=\int f(x)\upsilon
_{j}(\ln\left[  x\right]  )%
{\displaystyle\prod\limits_{k=1}^{d}}
\frac{1}{x^{k}}dx,
\]
whence%
\begin{equation}
\mu_{j}(x)=\upsilon_{j}(\ln\left[  x\right]  )%
{\displaystyle\prod\limits_{k=1}^{d}}
\frac{1}{x^{k}},\label{mutr}%
\end{equation}
and similarly one has%
\begin{equation}
p_{l-1,l}(x_{l-1},x)=q_{l-1,l}(\ln\left[  x_{l-1}\right]  ,\ln\left[
x\right]  )%
{\displaystyle\prod\limits_{k=1}^{d}}
\frac{1}{x^{k}}\label{ptr}%
\end{equation}
for the one step transition density of $X.$ Now let us take $\upsilon
_{1}(y_{1}):=\widehat{q}_{0,1}(y_{0},y_{1}),$ and recursively,
\[
\upsilon_{l}(y_{l})=\int dy_{l-1}\upsilon_{l-1}(y_{l-1})\widehat{q}%
_{l-1,l}(y_{l-1},y_{l}).
\]
We then have by (\ref{qr}),(\ref{mutr}), and (\ref{ptr})
\begin{align*}
&  \int\mu_{l-1}(x_{l-1})p_{l-1,l}(x_{l-1},x)dx_{l-1}=\int\upsilon_{l-1}%
(\ln\left[  x_{l-1}\right]  )%
{\displaystyle\prod\limits_{k=1}^{d}}
\frac{1}{x_{l-1}^{k}}dx_{l-1}\cdot\\
&  \cdot q_{l-1,l}(\ln\left[  x_{l-1}\right]  ,\ln\left[  x\right]  )%
{\displaystyle\prod\limits_{r=1}^{d}}
\frac{1}{x^{r}}\\
&  =\int\upsilon_{l-1}(y_{l-1})dy_{l-1}q_{l-1,l}(y_{l-1},\ln\left[  x\right]
)%
{\displaystyle\prod\limits_{r=1}^{d}}
\frac{1}{x^{r}}\leq\mathcal{R}_{l}\upsilon_{l}(\ln\left[  x\right]  )%
{\displaystyle\prod\limits_{r=1}^{d}}
\frac{1}{x^{r}}=\mathcal{R}_{l}\mu_{l}(x).
\end{align*}

\end{example}

\section{Computational cost}

\label{sec:computational-cost}

We will now discuss the advantages and disadvantages of the two different
approaches for various use-cases, both in the context of the Tsitsiklis--van
Roy algorithm and the Longstaff--Schwartz algorithm. The main issue is, of
course, the relation between computational work and accuracy. Comparing
Theorems~\ref{psth} and~\ref{rgth}, we see that the error as function of the
number of basis functions $K$, the choice of basis functions $\psi_{1},
\ldots, \psi_{K}$ and the number of samples $M$ is roughly equivalent for both methods.

\begin{remark}
\label{rem:1} We ignore the different constants as well as the additional $\ln
M$ term in Theorem~\ref{rgth}. In practice, different constants may, of
course, have drastic effects ion run-time, which is why the numerical
experiments presented in Section~\ref{sec:numer-exper} are crucial. {A more
subtle difference is related to the choice of the measure with respect to
which the error is calculated.} Also we note that we only focus on the cost of
computing the functions $c_{j},$ respectively $v_{j},$ as the other aspects of
the computation have negligible cost, independent of the chosen regression method.
\end{remark}

Let us recall our general setting: we are given a cash-flow process $Z_{j} =
f_{j}(X_{j})$, $j = 0, \ldots, \mathcal{J}$, which is based on an
$\mathbb{R}^{d}$-valued Markov process $X_{j}$, $j=0, \ldots, \mathcal{J}$,
and we would like to compute the corresponding Bermudan option price. In the
following, we need to make certain assumptions on the simulation.

\begin{assumption}
\label{ass:exact-simulation} We can exactly simulate the Markov process $X$.
More precisely, given a sample of $X_{j}$, we can simulate a sample of
$X_{j+1}$, $j=0, \ldots, \mathcal{J}-1$, exactly at cost normalized to one.
\end{assumption}

Assumption~\ref{ass:exact-simulation} seems to restrict us to simple models
such as Black-Scholes or Bachelier, for which exact simulation is easily
possible, but note that any discretization error would be expected to effect
both regression algorithms in the same way, both with respect to accuracy and
with respect to computational cost. Therefore, we think that
Assumption~\ref{ass:exact-simulation} is justified.

\begin{remark}
[Cost model]\label{rem:cost-model} In the discussions of computational cost,
all estimates shall be understood as counting the number of \emph{function
evaluations}. More precisely, each of the following operations incurs one unit cost:

\begin{itemize}
\item Generating one sample of $X_{j+1}$ conditional on $X_{j}$;

\item Evaluating a basis function $\psi_{k}$ at one point $x$;

\item An elementary floating point calculation such as a product between two
floating point numbers.
\end{itemize}

Of course, these operations incur very different computational costs in
practice. However, note that it is very difficult to realistically bound true
computational times any way. These may heavily depend on hardware features
(e.g., cache misses), and, in particular, on the implementation details.
\end{remark}

\subsection{Tsitsiklis--van Roy algorithm}

\label{sec:tsitsiklis-van-roy}

With Assumption~\ref{ass:exact-simulation}, we can already describe the
computational work of the standard regression algorithm.

\begin{proposition}
[Computational cost of standard regression]%
\label{prop:cost-standard-regression} The computational cost of the standard
regression satisfies
\[
\mathcal{C}_{\text{reg}} = \mathcal{O}\left(  \mathcal{J}( M K^{2} + K^{3})
\right)  .
\]

\end{proposition}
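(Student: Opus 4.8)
The plan is to count function evaluations through one backward step of the standard Tsitsiklis--van Roy algorithm, and then multiply by the number of exercise dates $\mathcal{J}$. Under Assumption~\ref{ass:exact-simulation}, simulating the $M$ trajectories across all $\mathcal{J}$ steps costs $\mathcal{O}(\mathcal{J}M)$, which is dominated by the regression cost below, so the bottleneck is clearly the regression at each time step.

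First I would fix a single step $j \to j-1$ and account for the three substantive operations in~\eqref{eq:TVR-classical-regression}--\eqref{eq:TVR-classical-continuation-value}. Forming the design matrix $\mathcal{N}$ with entries $\psi_k(X_{j-1}^{(m)})$ requires $KM$ basis-function evaluations. Assembling the normal equations as in~\eqref{bh} then costs $\mathcal{O}(K^2 M)$ elementary operations: the Gram-type matrix $\tfrac{1}{M}\mathcal{N}^\top\mathcal{N}$ is a $K\times K$ matrix each of whose entries is a sum over $M$ products, cf.~\eqref{spe}, and the vector $\mathcal{N}^\top\mathcal{Y}$ costs $\mathcal{O}(KM)$. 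Solving the resulting $K\times K$ linear system (or inverting $\tfrac{1}{M}\mathcal{N}^\top\mathcal{N}$) costs $\mathcal{O}(K^3)$ by standard dense linear algebra. Evaluating $\widehat{v}_{j-1}$ at the $M$ sample points needed for the next step costs an additional $\mathcal{O}(KM)$. Adding these, one backward step costs $\mathcal{O}(MK^2 + K^3)$.

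Then I would simply sum over the $\mathcal{J}$ backward steps. Since the per-step cost $\mathcal{O}(MK^2 + K^3)$ is uniform in $j$, the total is $\mathcal{O}\!\left(\mathcal{J}(MK^2 + K^3)\right)$, as claimed. The trajectory-simulation cost $\mathcal{O}(\mathcal{J}M)$ is absorbed into this bound since $M \le MK^2$.

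I do not expect any genuine obstacle here, as this is a bookkeeping argument built directly on the closed form~\eqref{bh} and the cost model of Remark~\ref{rem:cost-model}. The only point requiring a little care is being explicit that the $\mathcal{O}(K^3)$ term comes from the \emph{random} matrix inversion that must be redone at \emph{every} step (unlike in pseudo regression, where $\mathcal{G}^{-1}$ is precomputed once), so that the factor $\mathcal{J}$ multiplies the $K^3$ term and not merely the $MK^2$ term. This is precisely the asymptotic disadvantage of standard regression that the surrounding discussion is setting up, so I would make sure the proof highlights that the cubic term is incurred $\mathcal{J}$ times.
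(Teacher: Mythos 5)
Your proposal is correct and follows essentially the same route as the paper's proof: the paper likewise identifies the assembly of $\mathcal{N}^{\top}\mathcal{N}$ (cost $\mathcal{O}(MK^2)$) and the solution of the resulting $K\times K$ system by LU or Cholesky (cost $\mathcal{O}(K^3)$) as the dominating operations, both recomputed at each of the $\mathcal{J}$ exercise dates. Your version merely spells out the bookkeeping (design-matrix evaluations, $\mathcal{N}^{\top}\mathcal{Y}$, absorption of the $\mathcal{O}(\mathcal{J}M)$ simulation cost) in more detail than the paper's terse argument, which is fine.
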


\begin{proof}
This result is, of course, very well known. The dominating terms for the
computational cost the computation of the random matrix $\mathcal{N}^{\top
}\mathcal{N}$ and the computation of the coefficient $\widetilde{\beta}$ by,
e.g., LU or Cholesky decomposition. Both operations have to be recomputed for
each exercise time $j = 0, \ldots, \mathcal{J}-1$.
\end{proof}

For the pseudo regression approach we will operate under

\begin{assumption}
\label{ass:explicit-G} The basis functions $\psi_{1}, \ldots, \psi_{K}$ are
chosen such that the matrix $\mathcal{G}$ is given explicitly.
\end{assumption}

The assumption is most easily satisfied by choosing the basis function to be
orthonormal polynomials w.r.t.~$\mu$. Then we obtain

\begin{proposition}
[Computational cost of pseudo regression]\label{prop:cost-pseudo-inhom} The
computational cost of pseudo regression under
Assumptions~\ref{ass:exact-simulation} and~\ref{ass:explicit-G} satisfies
\[
\mathcal{C}_{\text{pseudo}} = \mathcal{O}\left(  \mathcal{J} M K + \mathcal{J}
K^{2} + K^{3} \right)  .
\]
If, in addition, the basis functions are orthonormal w.r.t.~$\mu$, then the
cost instead satisfies
\[
\mathcal{C}_{\text{pseudo}} = \mathcal{O}\left(  \mathcal{J} M K \right)  .
\]

\end{proposition}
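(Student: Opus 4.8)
The plan is to trace through Algorithm~\ref{alg:pseudo-bermudan} line by line, tallying the cost in the function-evaluation model of Remark~\ref{rem:cost-model}, while carefully separating the one-time set-up cost from the cost incurred in each of the $\mathcal{J}$ backward steps. The decisive structural feature---and the source of the improvement over standard regression (Proposition~\ref{prop:cost-standard-regression})---is that the samples $\mathcal{U}^{(m)}\sim\mu$, and hence the matrix $\mathcal{M}_{mk}=\psi_{k}(\mathcal{U}^{(m)})$, are generated once, outside the time loop, at cost $\mathcal{O}(M)$ and $\mathcal{O}(MK)$ respectively. Likewise, under Assumption~\ref{ass:explicit-G} the inverse Gram matrix $\mathcal{G}^{-1}$ can be precomputed (or factorised) a single time at cost $\mathcal{O}(K^{3})$; this is the only place the cubic term enters.

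Next I would account for the per-step cost of the iteration $j\to j-1$. First, simulating the $M$ samples $X_{j}^{j-1,\mathcal{U}^{(m)}}$ costs $\mathcal{O}(M)$ by Assumption~\ref{ass:exact-simulation}. Second, assembling the vector $\mathcal{Y}^{(j)}$ requires evaluating $\overline{v}_{j}=\max(f_{j},\overline{c}_{j})$ at these $M$ newly simulated points; since $\overline{c}_{j}(\cdot)=\sum_{k=1}^{K}\overline{\beta}_{k}^{(j+1)}\psi_{k}(\cdot)$ is a combination of $K$ basis functions, each evaluation costs $\mathcal{O}(K)$, giving $\mathcal{O}(MK)$. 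I would flag here that these basis-function evaluations genuinely cannot reuse $\mathcal{M}$, because $\overline{c}_{j}$ is evaluated at the simulated states $X_{j}^{j-1,\mathcal{U}^{(m)}}$ rather than at the $\mathcal{U}^{(m)}$ themselves. Third, forming $\mathcal{M}^{\top}\mathcal{Y}^{(j)}$ is a dense $K\times M$ times $M$ matrix--vector product costing $\mathcal{O}(MK)$, while the remaining multiplication $\mathcal{G}^{-1}(\mathcal{M}^{\top}\mathcal{Y}^{(j)})$ in~\eqref{pseudoreg}, with $\mathcal{G}^{-1}$ already available, is a single $K\times K$ matrix--vector product costing $\mathcal{O}(K^{2})$. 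Thus each step costs $\mathcal{O}(MK+K^{2})$.

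Summing over the $\mathcal{J}$ steps yields $\mathcal{O}(\mathcal{J}MK+\mathcal{J}K^{2})$, and adding the one-time set-up terms $\mathcal{O}(MK)$ and $\mathcal{O}(K^{3})$ (the former absorbed into $\mathcal{J}MK$) gives the first claimed bound $\mathcal{O}(\mathcal{J}MK+\mathcal{J}K^{2}+K^{3})$. For the refined estimate I would observe that orthonormality of the basis makes $\mathcal{G}=\mathcal{G}^{-1}$ the identity, so the final multiplication in~\eqref{pseudoreg} collapses to the scaling $\overline{\beta}^{(j)}=\frac{1}{M}\mathcal{M}^{\top}\mathcal{Y}^{(j)}$ at cost $\mathcal{O}(K)$ per step, eliminating the $\mathcal{J}K^{2}$ term; moreover no Gram inversion is needed, removing the $K^{3}$ term. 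What remains is the dominant $\mathcal{O}(\mathcal{J}MK)$.

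Since this is essentially a bookkeeping argument, I do not expect a deep obstacle. The only point requiring genuine care is the clean separation of amortised set-up work from per-step work, together with the observation that the basis functions must still be re-evaluated at the freshly simulated points at every step; this is why the $\mathcal{J}MK$ term is intrinsic and cannot be amortised away by reusing $\mathcal{M}$, whereas the expensive $\mathcal{O}(MK^{2})$ matrix assembly and $\mathcal{O}(K^{3})$ inversion of standard regression are incurred only once.
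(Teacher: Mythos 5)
Your proposal is correct and follows essentially the same bookkeeping argument as the paper: a one-time $\mathcal{O}(K^{3})$ factorisation of $\mathcal{G}$ and $\mathcal{O}(MK)$ set-up of $\mathcal{M}$, a per-step cost of $\mathcal{O}(MK)$ for assembling $\mathcal{Y}^{(j)}$ and forming $\mathcal{M}^{\top}\mathcal{Y}^{(j)}$ plus $\mathcal{O}(K^{2})$ for the solve with $\mathcal{G}$, and the collapse to $\mathcal{O}(\mathcal{J}MK)$ when $\mathcal{G}=\operatorname{Id}_{K}$. If anything, you are slightly more explicit than the paper in noting that the $\mathcal{O}(MK)$ per-step term is intrinsic because $\overline{v}_{j}$ must be evaluated at the freshly simulated states $X_{j}^{j-1,\mathcal{U}^{(m)}}$, where $\mathcal{M}$ cannot be reused.
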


\begin{proof}
First we need to compute the LU decomposition of the matrix $\mathcal{G}$, at
cost proportional to $K^{3}$---independent of $\mathcal{J}$. {We also need to
simulate the random variables $\mathcal{U}$ and set up the matrix
$\mathcal{M}$ at cost $\mathcal{O}(MK)$.} In each iteration of the algorithm,
we then need to {simulate the vector $\mathcal{Y}$ and} multiply
$\mathcal{M}^{\top}\mathcal{Y}$ at cost proportional to $MK$. Finally,
assembling the solution of the linear system $\mathcal{G} \overline{\beta} =
\frac{1}{M} \mathcal{M}^{\top}\mathcal{Y}$ incurs costs proportional to
$K^{2}$.

In the orthonormal case, we have $\mathcal{G} = \mathcal{G}^{-1} =
\operatorname{Id}_{K}$, and the cost of setting up $\mathcal{M}$ and
multiplying $\mathcal{M}^{\top}\mathcal{Y}$ becomes dominant.
\end{proof}

In practice, even better cost savings are possible under

\begin{assumption}
\label{ass:homog-markov} The Markov process $X$ is homogeneous in time, i.e.,
the conditional distribution of $X_{j+1}$ given $X_{j}$ does not depend on $j$.
\end{assumption}

This condition is very often satisfied in financial models, and it has drastic
implications for the pseudo regression algorithm (but not for the standard
regression). Indeed, since the conditional distribution does not depend on
$j$, and we always re-sample the starting points (at step $j$) from the same
distribution $\mu$---instead of the distribution of $X_{j}$---, we can simply
use the \emph{same} samples for setting up $\mathcal{Y}$ for each time-step
in~\eqref{bt}. Formally, the asymptotic cost does not change compared to
Proposition~\ref{prop:cost-pseudo-inhom}, but in practice we do observe major
effects due to decreasing constants.

\begin{remark}
\label{rem:non-exact-G} It is well-understood in practice that it is generally
beneficial to add the payoff function itself to the set of basis functions.
This may cause problems for the pseudo regression, as the inner products of
the payoff function with the other (typically polynomial) basis functions
cannot be expected to be given in closed form, thereby violating
Assumption~\ref{ass:explicit-G}. However, we can compute those scalar products
numerically, by quadrature, quasi Monte Carlo or even standard Monte Carlo, at
negligible extra cost, especially in the setting of
Assumption~\ref{ass:homog-markov}. With some additional work, we can still
achieve orthonormality by Gram-Schmidt.
\end{remark}

Let us summarize the findings of this section by looking at the most typical
case. Arguably, this is the case when $M \gg K, \mathcal{J}$. We may always
choose basis functions to be orthonormal, hence we consider the second case in
Proposition~\ref{prop:cost-pseudo-inhom}. For standard regression, the
computational costs are, hence, asymptotically proportional to $\mathcal{J}%
MK^{2}$, whereas the pseudo regression only incurs costs proportional to
$\mathcal{J}MK$. This will lead to a computational advantage, especially when
$K$ is large.

\subsection{Longstaff--Schwartz algorithm}

\label{sec:longst-schw-algor}

Asymptotically, the Longstaff--Schwartz algorithm based on standard regression
usually incurs the same cost as the Tsitsiklis--van Roy algorithm based on
standard regression (Proposition~\ref{prop:cost-standard-regression}).

\begin{proposition}
\label{prop:standard-longstaff-schwartz} The computational cost of the
Longstaff--Schwartz algorithm due to (\ref{SLS})--(\ref{SLS2}), using standard
regression, is
\[
\mathcal{C}_{reg} = \mathcal{O}\left(  \mathcal{J} M K^{2} + \mathcal{J} K^{3}
\right)  .
\]

\end{proposition}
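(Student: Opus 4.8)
The plan is to observe that the Longstaff--Schwartz recursion \eqref{SLS}--\eqref{SLS2} differs from the Tsitsiklis--van Roy recursion only in the choice of the response variable entering the least-squares step: instead of $\widehat{v}_j(X_j^{(m)})$ one uses the realized cash-flow $f_{\tau_j^{(m)}}(X_{\tau_j^{(m)}}^{(m)})$ along the current near-optimal stopping policy. Since the regression problem \eqref{SLS} has exactly the same algebraic structure as the one treated in Proposition~\ref{prop:cost-standard-regression}, the dominant cost per exercise date is unchanged, and it remains only to verify that the additional bookkeeping particular to Longstaff--Schwartz---maintaining the stopping times $\tau_j^{(m)}$ and the associated response vector---does not increase the asymptotic order. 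I would therefore split the accounting into the regression cost and the policy-update cost, and show that the latter is dominated.

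For the regression cost at a fixed step $j \to j-1$, I would reuse the analysis following \eqref{bh}: assembling the design matrix $\mathcal{N}$ with entries $\psi_k(X_{j-1}^{(m)})$ requires $\mathcal{O}(MK)$ function evaluations, forming the symmetric matrix $\mathcal{N}^\top \mathcal{N}$ costs $\mathcal{O}(MK^2)$ elementary operations, forming $\mathcal{N}^\top \mathcal{Y}$ costs $\mathcal{O}(MK)$, and solving the resulting $K \times K$ system by an LU or Cholesky factorization costs $\mathcal{O}(K^3)$. Hence a single step costs $\mathcal{O}(MK^2 + K^3)$, and summing over the $\mathcal{J}$ exercise dates gives the claimed $\mathcal{O}(\mathcal{J} M K^2 + \mathcal{J} K^3)$.

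It then remains to bound the policy bookkeeping, which is the only place where a genuine difference from the Tsitsiklis--van Roy case appears. The update \eqref{SLS2} requires, for each $m$, comparing $f_{j-1}(X_{j-1}^{(m)})$ with $\widehat{c}_{j-1}(X_{j-1}^{(m)})$. I would note that the vector of values $\bigl(\widehat{c}_{j-1}(X_{j-1}^{(m)})\bigr)_{m=1,\ldots,M}$ equals $\mathcal{N}\,\widehat{\beta}^{(j-1)}$, which is a single matrix--vector product at cost $\mathcal{O}(MK)$ reusing the already-assembled matrix $\mathcal{N}$; the comparisons and the conditional assignment of $\tau_{j-1}^{(m)}$ then cost $\mathcal{O}(M)$. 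Crucially, the response vector $\mathcal{Y}$ need not be recomputed from scratch at each step: it is maintained incrementally, since setting $\tau_{j-1}^{(m)} = j-1$ merely replaces the stored cash-flow by $f_{j-1}(X_{j-1}^{(m)})$, while all other entries are carried over unchanged, again at cost $\mathcal{O}(M)$ per step. Thus the total bookkeeping cost is $\mathcal{O}(\mathcal{J} M K)$, which is dominated by the $\mathcal{O}(\mathcal{J} M K^2)$ regression term. The only point requiring care---the main, if minor, obstacle---is justifying that this incremental maintenance of $\mathcal{Y}$ and of the $\tau_j^{(m)}$ is correct and does not hide an $\mathcal{O}(\mathcal{J})$-per-trajectory lookahead at each date; the recursive structure of \eqref{SLS2}, in which each trajectory's stopping decision is updated from the previous one by a single comparison, guarantees that it does not, which would complete the argument.
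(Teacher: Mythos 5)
Your proposal is correct and follows essentially the same operation-counting decomposition as the paper: $\mathcal{O}(MK^{2})$ per exercise date for forming $\mathcal{N}^{\top}\mathcal{N}$, $\mathcal{O}(K^{3})$ for the linear solve, with basis-function evaluation and policy bookkeeping dominated, summing to the stated bound. The only (minor) difference is in the accounting for $\mathcal{Y}^{(j)}$: the paper computes it per step from pre-computed continuation values (an $\mathcal{O}(\mathcal{J}M)$-per-step contribution, harmless in the regimes considered), whereas your incremental maintenance of the stopped cash-flows via the recursion \eqref{SLS2} is slightly sharper and makes explicit why no hidden $\mathcal{O}(\mathcal{J})$-per-trajectory lookahead arises.
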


\begin{proof}
First we simulate all trajectories at cost $\mathcal{O}(\mathcal{J}M)$ and
evaluate the basis functions along all simulated values at cost $\mathcal{O}%
(\mathcal{J}MK)$. For each step $j$ in the backward iteration we need to set
up the matrix $\mathcal{N}^{\top}\mathcal{N}$ at (individual) cost
$\mathcal{O}(MK^{2})$. Then we need to compute the right hand side
$\mathcal{Y}^{(j)}$ at cost $\mathcal{O}(\mathcal{J}M)$, which assumes that
the values of the continuation function at times $j+1, \ldots, \mathcal{J}$
have been pre-computed in the earlier stages of the backward iteration.
Finally, compute the coefficients at cost $\mathcal{O}(MK^{2} + K^{3})$.
\end{proof}

If we apply the Longstaff--Schwartz algorithm with pseudo-regression we note
an important difference compared to Tsitsiklis--van Roy: in the standard case
of the algorithm (presented in Section~\ref{sec:pseudo-regr-LS}) we
potentially have to evaluate the basis functions $\psi_{1}, \ldots, \psi_{K}$
for each sample $X^{j-1, \mathcal{U}^{m}}_{r}$, $r = j, \ldots, \mathcal{J}$.
In the worst case, this incurs costs proportional to $\mathcal{J}^{2} K M$ on
top. Hence, we obtain

\begin{proposition}
\label{prop:pseudo-LS} The computational cost of the Longstaff--Schwartz
algorithm based on pseudo regression is
\[
\mathcal{C}_{pseudo} = \mathcal{O}\left(  \mathcal{J}K^{2} + K^{3} +
\mathcal{J}^{2} MK \right)  .
\]
If the basis functions are orthonormal w.r.t.~$\mu$, then the costs reduce to
\[
\mathcal{C}_{pseudo} = \mathcal{O}\left(  \mathcal{J}^{2} MK \right)  .
\]

\end{proposition}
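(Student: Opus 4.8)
The plan is to follow the same accounting strategy as in the proofs of Proposition~\ref{prop:cost-standard-regression} and Proposition~\ref{prop:cost-pseudo-inhom}: split the total work into a one-time setup phase carried out before the backward loop and the per-step cost of the recursion $j \to j-1$, and then count function evaluations according to the cost model of Remark~\ref{rem:cost-model}. The only structural novelty relative to the Tsitsiklis--van Roy case is that, for Longstaff--Schwartz, assembling the response vector $\mathcal{Y}^{(j)}$ no longer amounts to a single evaluation of $\overline{v}_j$ per path but requires determining the pathwise stopping index $\tau$ in~\eqref{LSc}. I expect this single difference to account for the extra factor $\mathcal{J}$ that distinguishes Proposition~\ref{prop:pseudo-LS} from Proposition~\ref{prop:cost-pseudo-inhom}.

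First I would isolate the terms that are handled exactly as in the Tsitsiklis--van Roy analysis and therefore contribute no new difficulty. The LU factorization of $\mathcal{G}$ is computed once, at cost $\mathcal{O}(K^3)$, and the starting points $\mathcal{U}^{(m)}$ together with the matrix $\mathcal{M}_{mk} = \psi_k(\mathcal{U}^{(m)})$ are set up at cost $\mathcal{O}(MK)$. Within each of the $\mathcal{J}$ iterations, the linear-algebra portion of~\eqref{pseudoreg}---forming $\mathcal{M}^\top \mathcal{Y}^{(j)}$ and back-solving $\mathcal{G}\,\overline{\beta}^{(j)} = \tfrac{1}{M}\mathcal{M}^\top \mathcal{Y}^{(j)}$ from the stored factors---costs $\mathcal{O}(MK + K^2)$, contributing $\mathcal{O}(\mathcal{J}MK + \mathcal{J}K^2)$ overall. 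All of these are subsumed by the final bound once the dominant term below is identified.

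The heart of the argument is the cost of producing $\mathcal{Y}^{(j)}$. For each path $m$ and each exercise date $j$, the stopping index $\tau = \min\{r \geq j : f_r(X_r^{j-1,\mathcal{U}^{(m)}}) \geq \overline{c}_r(X_r^{j-1,\mathcal{U}^{(m)}})\}$ is found by scanning $r = j, j+1, \ldots$ and, at each $r$, comparing the payoff against the already-constructed continuation function $\overline{c}_r$. Evaluating $\overline{c}_r(z) = \sum_{k=1}^K \overline{\beta}^{(r+1)}_k \psi_k(z)$ at one point costs $\mathcal{O}(K)$ basis-function evaluations, whereas the payoff $f_r$ costs $\mathcal{O}(1)$. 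In the worst case the scan runs up to $r = \mathcal{J}$, so the work at level $j$ is $\mathcal{O}((\mathcal{J}-j)MK)$; summing $\sum_{j=1}^{\mathcal{J}}(\mathcal{J}-j) = \mathcal{O}(\mathcal{J}^2)$ gives the dominant contribution $\mathcal{O}(\mathcal{J}^2 MK)$. The trajectory simulations themselves (whether freshly drawn for each $j$ or obtained by the shift~\eqref{shifttime} in the autonomous case) together with the payoff evaluations cost only $\mathcal{O}(\mathcal{J}^2 M)$, which is absorbed. Collecting everything yields $\mathcal{O}(\mathcal{J}K^2 + K^3 + \mathcal{J}^2 MK)$, and in the orthonormal case $\mathcal{G} = \mathcal{G}^{-1} = \operatorname{Id}_K$ removes both the $K^3$ setup and the $\mathcal{J}K^2$ back-solves, leaving $\mathcal{O}(\mathcal{J}^2 MK)$.

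The main obstacle I anticipate is justifying the $\mathcal{J}^2$ factor as genuinely unavoidable in the general case rather than an artifact of naive bookkeeping. Unlike standard Longstaff--Schwartz, where a single family of trajectories is simulated once and the stopping indices $\tau_j^{(m)}$ are updated incrementally in~\eqref{SLS2}---so that each continuation function is touched only once per path---the pseudo-regression variant restarts a subtrajectory from $\mathcal{U}^{(m)}$ at every exercise date, and the evaluation points $X_r^{j-1,\mathcal{U}^{(m)}}$ change with $j$. Consequently there is, in general, no reuse of the $\overline{c}_r$-evaluations across different $j$, and care is needed to phrase the bound as a worst-case estimate (the scan for $\tau$ may terminate early) while confirming that no caching argument of the kind available in standard regression removes the extra factor. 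I would make this contrast explicit, as it is precisely the structural reason the pseudo Longstaff--Schwartz cost exceeds the pseudo Tsitsiklis--van Roy cost by one power of $\mathcal{J}$.
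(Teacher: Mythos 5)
Your accounting is correct and follows essentially the same route as the paper, which justifies Proposition~\ref{prop:pseudo-LS} by the remark preceding it: the costs of Proposition~\ref{prop:cost-pseudo-inhom} ($K^3$ and $\mathcal{J}K^2$ for the Gram-matrix linear algebra, $\mathcal{J}MK$ absorbed) plus the worst-case $\mathcal{O}(\mathcal{J}^2 MK)$ from having to evaluate the basis functions (via $\overline{c}_r$) at the fresh scan points $X_r^{j-1,\mathcal{U}^{(m)}}$, $r=j,\ldots,\mathcal{J}$, at every exercise date. Your closing observation on why no incremental update of stopping indices \`a la~(\ref{SLS2}) is available here is exactly the structural point the paper makes, so nothing is missing.
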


Suppose that we are actually in the setting of
Assumption~\ref{ass:homog-markov}. Then we may once again duplicate the
samples. In this case, we still need to simulate full trajectories starting
from the sampled initial points at $j=0$, but we can then ``shift'' those
samples in time. In this case, we only need to evaluate the basis functions at
$X^{0, \mathcal{U}^{m}}_{r}$, $r = j, \ldots, \mathcal{J}$, which incurs an
additional cost $\mathcal{O}(\mathcal{J}MK)$. On the other hand, we will get a
cost component $\mathcal{O}(\mathcal{J}^{2}M)$ simply from assembling
$\mathcal{Y}^{(j)}$ for each $j$. In total, we obtain

\begin{proposition}
\label{prop:pseudo-LS-copy} If Assumption~\ref{ass:homog-markov} holds and we
duplicate samples, then the computational cost of the Longstaff--Schwartz
algorithm based on pseudo regression is
\[
\mathcal{C}_{pseudo} = \mathcal{O}\left(  \mathcal{J}K^{2} + K^{3} +
\mathcal{J} MK + \mathcal{J}^{2} M \right)  .
\]
If the basis functions are orthonormal w.r.t.~$\mu$, then the costs reduce to
\[
\mathcal{C}_{pseudo} = \mathcal{O}\left(  \mathcal{J} MK + \mathcal{J}^{2} M
\right)  .
\]

\end{proposition}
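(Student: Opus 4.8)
The plan is to account separately for the one-time setup costs and the per-step costs of the backward iteration, exactly as in Propositions~\ref{prop:cost-pseudo-inhom} and~\ref{prop:pseudo-LS}, and then to exploit Assumption~\ref{ass:homog-markov} together with the sample-duplication identity~\eqref{shifttime} to show that the expensive quantities are computed only once rather than at every exercise date.

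First I would collect the setup costs. Simulating the $M$ full trajectories $X_0^{0,\mathcal{U}^{(m)}},\ldots,X_{\mathcal{J}}^{0,\mathcal{U}^{(m)}}$ costs $\mathcal{O}(\mathcal{J}M)$ under Assumption~\ref{ass:exact-simulation}. Because of homogeneity and~\eqref{shifttime}, these same trajectories serve every exercise date, so the basis functions need only be evaluated once along the single fixed family of nodes $X_s^{0,\mathcal{U}^{(m)}}$, $s=0,\ldots,\mathcal{J}$, $m=1,\ldots,M$; tabulating $\psi_k(X_s^{0,\mathcal{U}^{(m)}})$ for all $k,s,m$ costs $\mathcal{O}(\mathcal{J}MK)$. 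Finally, factorising the deterministic Gram matrix $\mathcal{G}$ once, e.g.\ by an LU or Cholesky decomposition, costs $\mathcal{O}(K^3)$, independent of $\mathcal{J}$.

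Next I would bound the work incurred in each of the $\mathcal{J}$ backward steps $j\to j-1$. Given the pre-tabulated basis values and the already-computed coefficient vectors $\overline{\beta}^{(l)}$, $l>j$, the continuation values entering the stopping rule~\eqref{LSc} are available, so determining the stopping time $\tau$ for each of the $M$ samples amounts to scanning the at most $\mathcal{J}$ nodes of the (shifted) trajectory and performing the comparison $f_r(\cdot)\geq\overline{c}_r(\cdot)$; this costs $\mathcal{O}(\mathcal{J}M)$ per step, hence $\mathcal{O}(\mathcal{J}^2 M)$ in total for assembling all the vectors $\mathcal{Y}^{(j)}$. The pseudo-regression coefficients then cost $\mathcal{O}(MK)$ per step for forming $\mathcal{M}^\top\mathcal{Y}^{(j)}$ and $\mathcal{O}(K^2)$ per step for applying $\mathcal{G}^{-1}$ by back-substitution against the stored factorisation, i.e.\ $\mathcal{O}(\mathcal{J}MK)$ and $\mathcal{O}(\mathcal{J}K^2)$ in total. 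Summing the setup and per-step contributions gives $\mathcal{O}(\mathcal{J}M+\mathcal{J}MK+K^3+\mathcal{J}^2 M+\mathcal{J}K^2)=\mathcal{O}(\mathcal{J}K^2+K^3+\mathcal{J}MK+\mathcal{J}^2 M)$, which is the first claim; in the orthonormal case $\mathcal{G}=\mathcal{G}^{-1}=\operatorname{Id}_K$, so the $K^3$ factorisation and the $\mathcal{O}(K^2)$ back-substitutions disappear, leaving $\mathcal{O}(\mathcal{J}MK+\mathcal{J}^2 M)$.

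The main obstacle is the $\mathcal{Y}^{(j)}$-assembly step: I must argue carefully that, in contrast to Proposition~\ref{prop:pseudo-LS}, neither an extra factor of $K$ nor an extra factor of $\mathcal{J}$ creeps into the per-step cost. This rests on two points that need to be made precise. First, homogeneity and the shift~\eqref{shifttime} guarantee that the nodes visited by the stopping-rule scan at every exercise date come from the single fixed family of trajectory nodes, so the basis-function table is built only once; this is exactly what removes the per-step re-evaluation that produced the $\mathcal{J}^2 MK$ term in Proposition~\ref{prop:pseudo-LS}. Second, and most delicately, I must track the bookkeeping of the continuation values $\overline{c}_r(X_s^{0,\mathcal{U}^{(m)}})$ so that, once the coefficients $\overline{\beta}^{(l)}$ are known, each node visited in the scan contributes only $\mathcal{O}(1)$ comparison work rather than the $\mathcal{O}(K)$ cost of forming a fresh linear combination; the additive $\mathcal{J}^2 M$ term then follows by summing the $\mathcal{O}(\mathcal{J})$-length scans over the $\mathcal{J}$ steps and the $M$ samples. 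Making this last reuse argument watertight is the crux of the proof, while the remaining contributions are the standard linear-algebra costs already analysed in Proposition~\ref{prop:cost-pseudo-inhom}.
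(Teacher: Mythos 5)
Your cost decomposition is exactly the one underlying the paper's own (informal) justification: the paper offers no formal proof of Proposition~\ref{prop:pseudo-LS-copy} beyond the paragraph preceding it, which charges $\mathcal{O}(\mathcal{J}M)$ for simulating the trajectories, $\mathcal{O}(\mathcal{J}MK)$ for evaluating the basis functions once along the fixed nodes $X^{0,\mathcal{U}^{(m)}}_s$ (this is the content of the duplication/shift identity~\eqref{shifttime}), $\mathcal{O}(K^3)$ and $\mathcal{O}(\mathcal{J}K^2)$ for factorising and applying $\mathcal{G}^{-1}$, $\mathcal{O}(\mathcal{J}MK)$ for the products $\mathcal{M}^\top\mathcal{Y}^{(j)}$, and $\mathcal{O}(\mathcal{J}M)$ per step for ``assembling $\mathcal{Y}^{(j)}$'', i.e.\ unit cost per node visited by the stopping-rule scan. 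Had you simply adopted that last charging convention, your argument would coincide with the paper's.

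Instead, you correctly single out as ``the crux'' the claim that each visited node costs $\mathcal{O}(1)$ rather than the $\mathcal{O}(K)$ of forming $\overline{c}_r$ at that node, and you defer it to a reuse/bookkeeping argument to be ``made watertight''. That argument cannot be completed, and this is a genuine gap. At backward step $j$ the scan queries $\overline{c}_r$ at the node $s=r-j+1$, i.e.\ at index pairs on the diagonal $r-s=j-1$; these diagonals are pairwise disjoint as $j$ varies, so no continuation value computed for one step is ever queried again at any other step, and the coefficient vectors $\overline{\beta}^{(r+1)}$ carry no algebraic relation across $r$ that could be exploited. Consequently each of the $\Theta(\mathcal{J}^2M)$ worst-case comparisons requires a fresh $K$-term dot product against the tabulated basis values: the totality of required values is essentially all entries of the $M$ matrix products $B^\top\Psi^{(m)}$ with $B\in\mathbb{R}^{K\times\mathcal{J}}$ (columns $\overline{\beta}^{(r+1)}$) and $\Psi^{(m)}\in\mathbb{R}^{K\times\mathcal{J}}$ (tabulated basis values), which costs $\Theta(\mathcal{J}^2K)$ per sample under naive multiplication. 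Under the strict flop-counting model of Remark~\ref{rem:cost-model} the scans therefore cost $\Theta(\mathcal{J}^2MK)$, not $\mathcal{O}(\mathcal{J}MK+\mathcal{J}^2M)$; what duplication genuinely removes, relative to Proposition~\ref{prop:pseudo-LS}, is only the repeated \emph{basis-function} evaluations ($\mathcal{J}^2MK$ down to $\mathcal{J}MK$). The proposition's stated bound thus corresponds to charging the scan only for comparisons and payoff evaluations --- that is, to counting the evaluation of an already-fitted $\overline{c}_r$ at a tabulated node as a single function evaluation, which is what the paper implicitly does. You should either state this convention explicitly, or note that under literal flop counting the first bound would have to read $\mathcal{O}(\mathcal{J}K^2+K^3+\mathcal{J}^2MK)$; the rigorous $\mathcal{O}(1)$-per-node bookkeeping you promise does not exist.
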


Let us, once again, summarize the discussion on the computational costs by
looking at a typical case. For true Bermudan options, $\mathcal{J}$ is, of
course, fixed, while $M$ and $K$ need to be increased in order to improve the
accuracy of the estimator. Hence, the typical case for asymptotic
considerations is probably $M \gg K \gg\mathcal{J}$. Once again, we may very
well assume to have chosen orthonormal basis functions together with
Assumption~\ref{ass:homog-markov}. Hence, regarding pseudo regression, we are
in the second case of Proposition~\ref{prop:pseudo-LS-copy}. Under these
conditions, the computational cost of the Longstaff--Schwartz algorithm with
standard regression is asymptotically proportional to $\mathcal{J}MK^{2}$,
while the pseudo regression incurs cost asymptotically proportional to
$\mathcal{J} MK$. Again, the costs of the standard regression dominate in the
long run.

\section{Numerical experiments}

\label{sec:numer-exper}

The numerical experiments below are run on a laptop computer with an
Intel\textregistered\, Core\texttrademark\,i7-6500U processor and 8GB RAM. All
algorithms are implemented and executed in GNU Octave version 4.0.3 running on
openSUSE Leap 42.3. {Moreover, all the codes are single-threaded.}

We consider a Bermudan Max-Call option on $n$ assets which, for instance, has
already been considered in \cite{andersonbroadie}. The assets $X^{i}$ are
identically distributed and yield dividends with rate $\delta$. They are given
as the solutions to
\begin{align}
\label{sde}dX^{i}_{t} = (r - \delta) X^{i}_{t} dt + \sigma X^{i}_{t}
dW^{i}_{t},\quad X^{i}_{0}=x_{0}, \quad t\in[0, T], \quad i=1, \ldots, n,
\end{align}
where $W^{i}$ are independent scalar Brownian motions. The interest rate $r$
as well as $\sigma$ are constant. We assume to have $\mathcal{J}+1$ exercise
dates $0\leq t_{0}< t_{1}, \ldots< t_{\mathcal{J}}\leq T$ in which the option
holder may exercise to obtain the payoff%
\begin{align}
\label{payofffkt}h(X_{t})=\left(  \max(X_{t}^{1}, \ldots, X^{n}_{t}%
)-\kappa\right)  ^{+},
\end{align}
where $\kappa>0$. Moreover, we introduce the discounted payoff function by
$f_{t}(X_{t}) = \operatorname{e}^{-r t}h(X_{t})$.\smallskip

Throughout the remainder of this section, it is assumed that $T=3$, $r=0.05$,
$\delta= 0.1$, $\sigma=0.2$ and $\kappa=100$. We further choose $t_{j}=
j\frac{T}{\mathcal{J}}$ for $j=0, \ldots, \mathcal{J}$.

\subsection{Option pricing using Tsitsiklis--van Roy}

\label{numericsTvR} We aim to determine an approximation $\bar v_{0}$ of the
value $Y_{0}$ of the Bermudan Max-Call option above. To do so, we use the
algorithm of Tsitsiklis--van Roy \cite{TV2001}, where the computation of the
continuation functions $\bar c_{j}$ ($j=0, \ldots, \mathcal{J}-1$) is based on
the standard regression (SR). On the other hand we use the algorithm explained
in Section \ref{alg} (see also Algorithm~\ref{alg:pseudo-bermudan}), in which
SR is replaced by a pseudo regression (PR) method.

\smallskip

In this section, we set $\mathcal{J}=9$. The general idea within the PR based
algorithm is to choose random initial values $\mathcal{U}_{t_{j}}\sim
\mu_{t_{j}}$ at every exercise date $t_{j}$ for each component $X^{i}$ that
is  given through (\ref{sde}). In fact, we obtain good results for this scheme
by setting
\begin{align*}
\mathcal{U}_{t_{j}}=\operatorname{e}^{\operatorname{m}_{t_{j}} + \hat
\sigma_{t_{j}} Z},
\end{align*}
where $Z\sim\mathcal{N}(0, 1)$, i.e., $\mathcal{U}_{t_{j}}$ is log-normal
distributed, a constant variance parameter $\hat\sigma_{t_{j}}\equiv\hat
\sigma\in\left[ \sigma\sqrt{T/2}, \sigma\sqrt{T}\right] $ and a mean parameter
$\operatorname{m}_{t_{j}}= (r-\delta)t_{j}-0.5 \hat\sigma^{2}+\ln(x_{0})$
ensuring $\mathbb{E }[\mathcal{U}_{t_{j}}]=\mathbb{E }[X^{i}_{t_{j}}]$.
However, it turns out that the PR algorithm is not very sensitive in the mean
parameter such that we can choose a constant one, i.e., $\operatorname{m}%
_{t_{j}}\equiv\operatorname{m}= (r-\delta)t-0.5 \sigma^{2} t+\ln(x_{0})$ for a
fixed $t\in[T/2, T]$. This means that we choose $\mathcal{U}_{t_{j}}%
\equiv\mathcal{U}$ (or $\mu_{t_{j}}\equiv\mu$) independent of the exercise
date which has the advantage of reducing the number of basis function
evaluations to two. Hence, the PR schemes is computationally even cheaper than
before.\smallskip

The particular choice of the mean parameter $\operatorname{m}$ and a variance
parameter $\hat\sigma$ depends on $n$ since we observe that we obtain somewhat
better results if these parameter are slightly modified with the number of
assets $n$.\smallskip

We choose orthonormal polynomials $(\psi_{k})_{k=1, \ldots, K}$ with respect
to $\mu$. To be more precise, we introduce Hermite polynomials on $\mathbb{R}$
of degree $i$ which we denote by $H_{i}$. We then define $\psi_{1}, \ldots,
\psi_{K}$ via a suitable ordering of the functions
\begin{align*}
\prod_{j=1}^{n} H_{i_{j}}\left(  \frac{\ln(y_{j})-\operatorname{m}}{\hat
\sigma}\right)
\end{align*}
with $i_{1}+ i_{2}+ \ldots+ i_{n} \leq p$, where $p\in\mathbb{N}$ is the
largest polynomial degree and $i_{j}\in\mathbb{N}$ for all $j$. Thus, the
total number of basis functions is $K=\frac{(p+n)!}{p! \,n!}$. In fact, we
just pick all the products of Hermite polynomials with total degree up to $p,$
and use some bijection to assign them to an index $k=1, \ldots, K$. In the
following, the above orthonormal functions are not just used for the PR but
also within the SR ansatz.\smallskip

We now determine $\bar v_{0}$ of the Bermudan option for the initial values
$x_{0}= 90, 100, 110$. Moreover, we conduct the numerical experiments for
$n=2, 3, 4, 5$ assets. We set $p=4$ for $n=5$, else we fix $p=5$. The reason
for using a different polynomial degree for $n=5$ is that we aim to {achieve
the same accuracy} for both the PR and the SR based algorithm in order to be
able to compare both schemes. With this choice of $p$ the same output is
approximately obtained by using  $M=2$e$+06$ samples for both methods in the
derivation of the continuation functions. \smallskip

Of course, one can also choose a different number of samples for PR than for
SR in order to find the exactly the same output for both cases. However, it is
hard to find these numbers of samples such that both algorithms yield exactly
the same output $\bar v_{0}$. Notice that the algorithm with a slightly lower
number for $\bar v_{0}$ can always be improved by using more
samples.\smallskip

We start with the case $n=2$ for which we fix $\operatorname{m}= (r-\delta-0.5
\sigma^{2}) \frac{T}{2}+\ln(x_{0})$ and $\hat\sigma=0.26$. We see that both PR
and SR perform almost equally well for the case $n=2$, see first block of
Table \ref{tablev0}. It seems that PR even yields slightly better results for
$\bar v_{0}$. The respective computational times can be found in Figure
\ref{n2TvR}. It turns out that the PR algorithm is more than three times
faster.\smallskip

It is possible to also take the same mean and variance parameter for $n>2$.
However, enlarging the variance slightly to $\hat\sigma=0.29$ for $n=3$ leads
to results which are a little bit better. The mean parameter remains the same
as for $n=2$. From the second block in Table \ref{tablev0}, it can be seen
that both algorithms lead to approximately the same value $\bar v_{0}$. The
advantage of using PR is the much lower computational time. We know from
Figure \ref{n3TvR} that we save a factor of more than five compared to
SR.\smallskip

In the case of $n=4$, the variance parameter is again enlarged compared to the
case of having $n=3$ assets, i.e., $\hat\sigma=0.32$. We modify the mean
parameter as well which in this case is $\operatorname{m}= (r-\delta-0.5
\sigma^{2}) 2.56+\ln(x_{0})$. Again, SR and PR yield results of the same
quality, see third block of Table \ref{tablev0}, whereas the PR based
algorithm is extremely fast compared to the SR. Figure \ref{n4TvR} shows that
a factor of more than nine can be saved.\smallskip

From $n=4$ to $n=5$, only the mean parameter is changed to $\operatorname{m}=
(r-\delta-0.5 \sigma^{2}) 3+\ln(x_{0})$. SR and PR provide approximately the
same outputs, see fourth block of Table \ref{tablev0}. Moreover, Figure
\ref{cpu_time_TvRn5} shows a more than seven times faster computation of $\bar
v_{0}$ using PR. \begin{table}[th]
\centering
\begin{tabular}
[c]{cccc}\hline
$x_{0}$ & $\bar v_{0}(x_{0})$ based on PR & $\bar v_{0}(x_{0})$ based on SR &
Parameter used\\\hline
$n=2$ &  &  & \\
$90$ & $8.046\; (0.006)$ & $8.030\; (0.006)$ & $K=21$, $\hat\sigma=0.26$\\
$100$ & $13.884\; (0.008)$ & $13.868\; (0.008)$ & and\\
$110$ & $21.322\; (0.009)$ & $21.314\; (0.009)$ & $\operatorname{m}=\ln
(x_{0})-0.105$\\
$n=3$ &  &  & \\
$90$ & $11.238\;(0.007)$ & $11.234 \;(0.007)$ & $K=56$, $\hat\sigma=0.29$\\
$100$ & $18.640\;(0.009)$ & $18.640\;(0.009)$ & and\\
$110$ & $27.533\;(0.010)$ & $27.520\;(0.010)$ & $\operatorname{m}=\ln
(x_{0})-0.105$\\
$n=4$ &  &  & \\
$90$ & $14.045\;(0.008)$ & $14.049\;(0.008)$ & $K=126$, $\hat\sigma=0.32$\\
$100$ & $22.638\;(0.009)$ & $22.638\;(0.010)$ & and\\
$110$ & $32.527\;(0.011)$ & $32.531\;(0.011)$ & $\operatorname{m}=\ln
(x_{0})-0.179$\\
$n=5$ &  &  & \\
$90$ & $16.567\;(0.008)$ & $16.562 \;(0.008)$ & $K=126$, $\hat\sigma=0.32$\\
$100$ & $26.054\;(0.010)$ & $26.055\;(0.010)$ & and\\
$110$ & $36.665\;(0.011)$ & $36.657\;(0.011)$ & $\operatorname{m}=\ln
(x_{0})-0.21$\\
&  &  & \\
&  &  & \\
&  &  &
\end{tabular}
\caption{Approximative value $\bar v_{0}$ of the Bermudan option based on SR
and PR for $\mathcal{J}=9$ using Tsitsiklis--van Roy. The computation of the
continuation functions is based on $M=2$e$+06$ samples.}%
\label{tablev0}%
\end{table}

\begin{figure}[th]
\centering
\begin{subfigure}[b]{1\textwidth}
\includegraphics{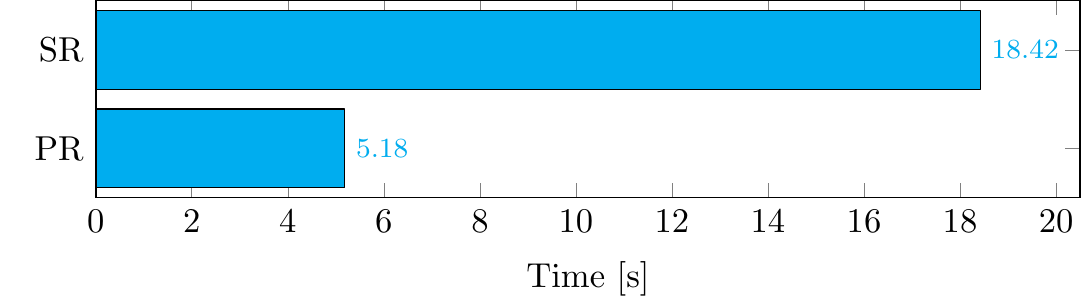}
\caption{Computational times for $n=2$.}\label{n2TvR}
\end{subfigure}\vspace{0.25cm} \begin{subfigure}[b]{1\textwidth}
\includegraphics{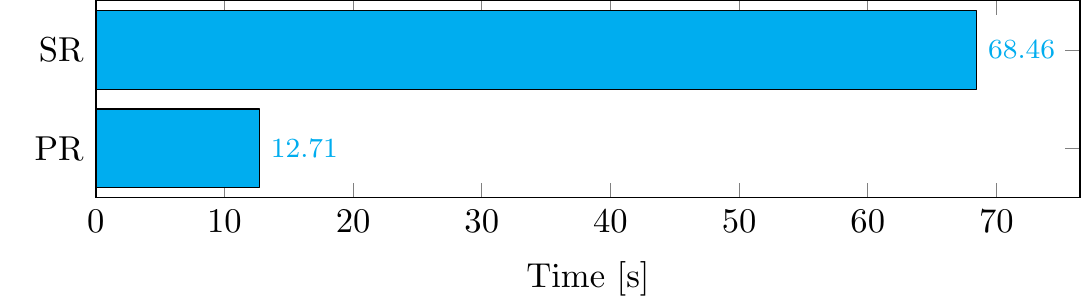}
\caption{Computational times for $n=3$.}\label{n3TvR}
\end{subfigure}\vspace{0.25cm} \begin{subfigure}[b]{1\textwidth}
\includegraphics{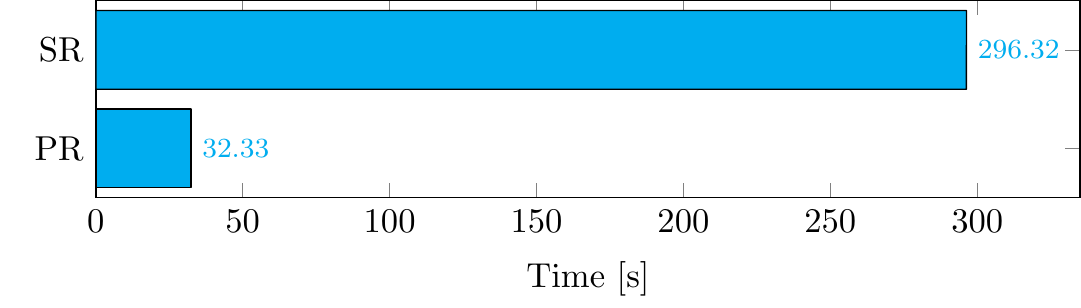}
\caption{Computational times for $n=4$.}\label{n4TvR}
\end{subfigure}
\begin{subfigure}[b]{1\textwidth}
\includegraphics{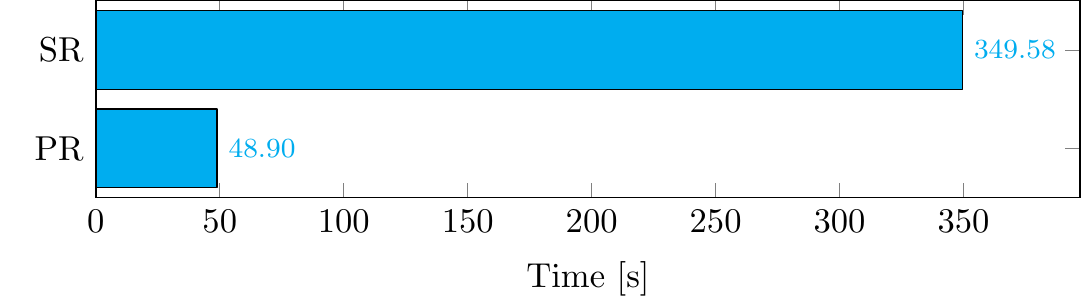}
\caption{Computational times for $n=5$.}\label{cpu_time_TvRn5}
\end{subfigure}
\caption{Time to compute continuation functions using Tsitsiklis--van Roy and
$M=2$e$+06$ samples; standard versus pseudo regression for $n=2, 3, 4, 5$ and
$\mathcal{J}=9$.}%
\label{cpu_time_TvR}%
\end{figure}\smallskip

\subsection{Option pricing using Longstaff--Schwartz}

In this section, we study the same problem as in Section \ref{numericsTvR}.
However, we now determine the approximation $\bar v_{0}$ for the value of the
Bermudan Max-Call option based on Longstaff--Schwartz. Again, we investigate a
PR based version (see Section \ref{sec:pseudo-regr-LS}) and a SR based method
(see \cite{LS2001}) for this approach in order to derive the continuation
functions $\bar c_{j}$.\smallskip

We observe from the simulations that PR is clearly faster than SR for
Longstaff--Schwartz for $n>2$ using the same parameter as in Section
\ref{numericsTvR}. Rather than presenting the same experiments for
Longstaff--Schwartz again, we only point out a typical case that exhibits a 
large gain in computational time:  the case $n=4$ and
$\mathcal{J}=4$, and all other parameter as before. Table \ref{tablev0LS}
shows us that both types of regression provide a similar value for $\bar
v_{0}$ but using SR is more than two times more expensive, compare Figure
\ref{cpu_time_LS}. \begin{table}[th]
\centering
\begin{tabular}
[c]{cccc}\hline
$x_{0}$ & $\bar v_{0}(x_{0})$ based on PR & $\bar v_{0}(x_{0})$ based on SR &
Parameter used\\\hline
$n=4$ &  &  & \\
$90$ & $13.719\;(0.008)$ & $13.708\;(0.008)$ & $K=126$, $\hat\sigma_{n}%
=0.32$\\
$100$ & $22.170\;(0.010)$ & $22.163\;(0.010)$ & and\\
$110$ & $31.914\;(0.011)$ & $31.915\;(0.011)$ & $\operatorname{m}_{n}%
=\ln(x_{0})-0.179$\\
&  &  & \\
&  &  &
\end{tabular}
\caption{Approximative value $\bar v_{0}$ of the Bermudan option based on SR
and PR for $\mathcal{J}=4$ using Longstaff--Schwartz.}%
\label{tablev0LS}%
\end{table}

\begin{figure}[th]
\centering
\includegraphics{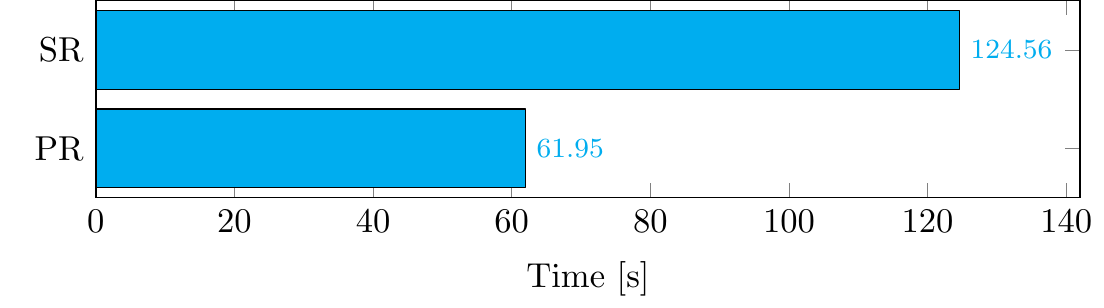} \caption{Time to compute continuation functions
using Longstaff--Schwartz; standard versus pseudo regression for $n=4$ and
$\mathcal{J}=4$.}%
\label{cpu_time_LS}%
\end{figure}

\begin{remark}
The gain of the pseudo-regression version of the Longstaff--Schwartz algorithm
with respect to the standard one is generally smaller than the gain in Section
\ref{numericsTvR}, in the context of Tsitsiklis--van Roy. The reason is clear:
The backward steps (\ref{SLS})--(\ref{SLS2}) in the standard algorithm cannot
be straightforwardly modified to a pseudo-regression based setting. Therefore
in order to construct a vector $\mathcal{Y}^{(j)}$ of stopped cash-flows in a
pseudo-regression based setup analogue to the standard method, new
trajectories are simulated for each exercise date starting from an initial
state simulated under $\mu.$ Of course, this makes the procedure more costly,
however, the generally expensive inversion of a random matrix at each exercise
date is avoided this way.
\end{remark}

\section{Conclusions}

\label{sec:conclusions}

We compare the classical regression method for solving Bermudan options in the
form of either the Tsitsiklis--van Roy or the Longstaff--Schwartz algorithm
with a new variant based on pseudo regression, i.e., Monte Carlo simulation of
$L^{2}$ inner products based on samples simulated from an artificial base
measure $\mu$ (not directly related to the distribution of the underlying
stock price at any point in time).

As a key issue, setting up and inverting a random matrix at every exercise
date, can be avoided in the pseudo-regression approach. Therefore, the
computational cost can be considerably lower for a similar level of accuracy,
as verified in numerical examples in Section~\ref{sec:numer-exper}. This is
also motivated from an asymptotic cost analysis, see
Section~\ref{sec:computational-cost}. Furthermore, the pseudo-regression based
algorithms are much easier to analyze theoretically, and leads to shorter and
clearer convergence proofs (see Section~\ref{gvps}). At the same time,
convergence rates are slightly improved as the logarithmic error terms seen in
classical regression error analysis (due to inversion of a random matrix) can
be omitted.

The choice of the probability measure $\mu$ turns out to be crucial for the
success of the pseudo regression algorithm. On the other hand, the procedure
is insensitive enough w.r.t.~$\mu$ that one single measure can in many cases
be chosen for each time step. However, $\mu$ has to be chosen in an
appropriate way since bad choices can lead to vastly increased errors of the
ultimately computed option prices. As a rule, $\mu$ should be chosen such that
$\mathcal{U}\sim\mu$ covers the important areas of the underlying stochastic
process $X_{j}$ for all $j$. These areas can usually be estimated roughly from
the dynamics of the underlying process.

In some sense, the particular choice of $\mu$ can be compared to importance
sampling, as it allows to change the distribution of the actual sampled
points, without inducing bias. As such we expect that the flexibility in the
choice of $\mu$ could be advantageous in particular in situations like deep
out-of-the-money options, when the payoff is positive only on a rare event. We
will study these aspects in more details in future work.

\appendix

\section{Proofs}

\subsection{Proof of Theorem~\ref{psth}}

\label{sec:proof-theorem-psth}

Let $u^{K}$ be the projection of $u$ on to the linear span of $\psi
_{1},...,\psi_{K},$ i.e.,%
\begin{equation}
u^{K}=\underset{w\,\in\,\spn \{\psi_{1},...,\psi_{K}\}}{\arg\inf}%
\int_{\mathcal{D}}\left\vert w(z)-u(z)\right\vert ^{2}\mu(dz). \label{pr}%
\end{equation}
Then, with $\gamma:=(\gamma_{1},...,\gamma_{K})^{\top}\in\mathbb{R}^{K}$
defined by%
\begin{equation}
u^{K}=\sum_{k=1}^{K}\gamma_{k}\psi_{k}, \label{uK}%
\end{equation}
and $\alpha\in\mathbb{R}^{K}$ defined by $\alpha_{k}\coloneqq
\left\langle \psi_{k} \, , u \right\rangle _{L^{2}(\mu)},$ it follows
straightforwardly by taking scalar products that
\begin{equation}
\gamma=\mathcal{G}^{-1}\alpha. \label{ag}%
\end{equation}
By the rule of Pythagoras it follows that,%
\begin{gather}
\mathbb{E}\int_{\mathcal{D}}\left\vert \overline{u}(z)-u(z)\right\vert ^{2}%
\mu(dz)=\label{th11}\\
\mathbb{E}\int_{\mathcal{D}}\left\vert \overline{u}(z))-u^{K}(z)\right\vert
^{2}\mu(dz)+\int_{\mathcal{D}}\left\vert u^{K}(z)-u(z)\right\vert ^{2}%
\mu(dz).\nonumber
\end{gather}
Hence, the second term in (\ref{th1}) is clear due to (\ref{pr}) and
(\ref{th11}). With $\psi(z):=\left(  \psi_{1}(z),...,\psi_{K}(z)\right)
^{\top}$ we obtain for the first term in (\ref{th11}) that
\begin{align*}
&  \mathbb{E}\int_{\mathcal{D}}\left\vert \overline{u}(z)-u^{K}(z)\right\vert
^{2}\mu(dz)=\int_{\mathcal{D}}\mathbb{E}\left\vert \overline{\beta}^{\top}%
\psi(z)-\gamma^{\top}\psi(z)\right\vert ^{2}\mu(dz)\\
&  =\int_{\mathcal{D}}\mathbb{E}\left\vert \left(  \frac{1}{M}\mathcal{Y}%
^{\top}\mathcal{M}-\alpha^{\top}\right)  \mathcal{G}^{-1}\psi(z)\right\vert
^{2}\mu(dz)\\
&  =\int_{\mathcal{D}}\mathbb{E}\left[  \left(  \frac{1}{M}\mathcal{Y}^{\top
}\mathcal{M}-\alpha^{\top}\right)  \mathcal{G}^{-1}\psi(z)\psi^{\top
}(z)\mathcal{G}^{-1}\left(  \frac{1}{M}\mathcal{M}^{\top}\mathcal{Y}%
-\alpha\right)  \right]  \mu(dz)\\
&  =\mathbb{E}\left[  \left(  \frac{1}{M}\mathcal{Y}^{\top}\mathcal{M}%
-\alpha^{\top}\right)  \mathcal{G}^{-1}\left(  \frac{1}{M}\mathcal{M}^{\top
}\mathcal{Y}-\alpha\right)  \right]  ,
\end{align*}
using (\ref{bt}), (\ref{ut}), (\ref{uK}), (\ref{ag}), and
\[
\int_{\mathcal{D}}\left[  \psi(z) \psi^{\top}(z)\right]  _{kl}\mu
(dz)=\langle\psi_{k},\psi_{l}\rangle=\mathcal{G}_{kl}.
\]
We thus have%
\begin{align*}
0  &  \leq\mathbb{E}\int_{\mathcal{D}}\left\vert \overline{u}(z)-u^{K}%
(z)\right\vert ^{2}\mu(dz)\\
&  \leq\frac{1}{\underline{\lambda_{\min}}}\mathbb{E}\left\vert \frac{1}%
{M}\mathcal{N}^{\top}\mathcal{Y}-\alpha\right\vert ^{2}=\frac{1}%
{\underline{\lambda_{\min}}}\sum_{k=1}^{K}\operatorname{Var}\left[  \frac
{1}{M}\mathcal{M}^{\top}\mathcal{Y}\right]  _{k},
\end{align*}
since%
\begin{align}
\mathbb{E}\left[  \frac{1}{M}\mathcal{M}^{\top}\mathcal{Y}\right]  _{k}  &
=\frac{1}{M}\mathbb{E}\sum_{m=1}^{M}\psi_{k}(\mathcal{U}^{(m)})Y^{(m)}%
\label{Krem}\\
&  =\mathbb{E}\left(  \psi_{k}(\mathcal{U}^{(1)})Y^{(1)}\right)
=\mathbb{E}\left(  \psi_{k}(\mathcal{U}^{(1)})\mathbb{E}\left[  Y^{(1)}%
\,|\,\mathcal{U}^{(1)}\right]  \right) \nonumber\\
&  =\langle\psi_{k},u\rangle=\alpha_{k}.\nonumber
\end{align}
Now, by observing that
\begin{align}
\operatorname{Var}\left[  \frac{1}{M}\mathcal{M}^{\top}\mathcal{Y}\right]
_{k}  &  =\operatorname{Var}\left(  \frac{1}{M}\sum_{m=1}^{M}\psi
_{k}(\,\mathcal{U}^{(m)})Y^{(m)}\right) \label{Krem1}\\
&  =\frac{1}{M}\operatorname{Var}\left(  \psi_{k}(\,\mathcal{U}^{(1)}%
)Y^{(1)}\right) \nonumber\\
&  =\frac{1}{M}\mathbb{E}\operatorname{Var}\left[  \psi_{k}(\,\mathcal{U}%
^{(1)})Y^{(1)}|\,\mathcal{U}^{(1)}\right]  +\frac{1}{M}\operatorname{Var}
\mathbb{E}\left[  \psi_{k}(\,\mathcal{U}^{(1)})Y^{(1)}|\,\mathcal{U}%
^{(1)}\right] \nonumber\\
&  =\frac{1}{M}\mathbb{E}\text{\thinspace}\left(  \psi_{k}^{2}(\,\mathcal{U}%
^{(1)})\operatorname{Var}\left[  Y^{(1)}|\,\mathcal{U}^{(1)}\right]  \right)
+\frac{1}{M}\operatorname{Var}\psi_{k}(\,\mathcal{U}^{(1)})u\left(
\,\mathcal{U}^{(1)}\right) \nonumber\\
&  \leq\frac{\sigma^{2}+D^{2}}{M}\mathcal{G}_{kk}^{K},\nonumber
\end{align}
one has%
\begin{align*}
\frac{1}{\underline{\lambda_{\min}}}\sum_{k=1}^{K}\operatorname{Var}\left[
\frac{1}{M}\mathcal{M}^{\top}\mathcal{Y}\right]  _{k}\leq\frac{\sigma
^{2}+D^{2}}{M\underline{\lambda_{\min}}}\text{tr}\left(  \mathcal{G}%
^{K}\right)  \leq\frac{\sigma^{2}+D^{2}}{M\underline{\lambda_{\min}}%
}K\overline{\lambda_{\max}},
\end{align*}
and then (\ref{th1}) follows.

\begin{remark}
{From (\ref{Krem}) we see that we are essentially approximating the inner
products $\left\langle \psi_{k} \, , u \right\rangle _{L^{2}(\mu)}$ by a
simple Monte Carlo simulation. At a first glance one may estimate the squared
error due to (\ref{Krem1}) as being proportional to $K^{2}/M$ (up to the
projection error itself). Thus, Theorem~\ref{psth} states that actually this
error is proportional to $K/M$, even when the basis functions are not
orthogonal.}
\end{remark}

\subsection{Proof of Lemma \ref{lem23}}

\label{sec:proof-lemma-reflem23}

Let $X$ be a generic trajectory independent of $\mathcal{G}_{j+1},$ and let us
represent a family of optimal stopping times $\tau_{j}^{\ast},$
$j=1,...,\mathcal{J},$ by $\tau_{\mathcal{J}}^{\ast}=\mathcal{J},$ and for
$j<\mathcal{J},$
\[
\tau_{j}^{\ast}:=j\,1_{\bigl\{f_{j}(X_{j})\geq c_{j}(X_{j})\bigr\}}+\tau
_{j+1}^{\ast}1_{\bigl\{f_{j}(X_{j})<c_{j}(X_{j})\bigr\}}.
\]
For $j<\mathcal{J}$ we then have,%
\begin{gather*}
f_{\tau_{j+1}^{\ast}}(X_{\tau_{j+1}^{\ast}})-f_{\overline{\tau}_{j+1}%
}(X_{\overline{\tau}_{j+1}})=\left(  f_{j+1}(X_{j+1})-f_{\overline{\tau}%
_{j+1}}(X_{\overline{\tau}_{j+1}})\right)  1_{\{\tau_{j+1}^{\ast
}=j+1,\overline{\tau}_{j+1}>j+1\}}\\
+\left(  f_{\tau_{j+1}^{\ast}}(X_{\tau_{j+1}^{\ast}})-f_{j}(X_{j+1})\right)
1_{\{\tau_{j+1}^{\ast}>j+1,\overline{\tau}_{j+1}=j+1\}}\\
+\left(  f_{\tau_{j+1}^{\ast}}(X_{\tau_{j+1}^{\ast}})-f_{\overline{\tau}%
_{j+1}}(X_{\overline{\tau}_{j+1}})\right)  1_{\{\tau_{j+1}^{\ast
}>j+1,\overline{\tau}_{j+1}>j+1\}}.
\end{gather*}
By denoting temporarily $\mathsf{E:=}\mathbb{E}_{\mathcal{G}_{j+1}},$ and
denoting $\mathcal{R}_{j}:=\mathsf{E}\left[  \left.  f_{\tau_{j+1}^{\ast}%
}(X_{\tau_{j+1}^{\ast}})-f_{\overline{\tau}_{j+1}}(X_{\overline{\tau}_{j+1}%
})\right\vert X_{j}\right]  ,$ we have $\mathcal{R}_{j}\geq0$ almost surely,
and%
\begin{align}
\mathcal{R}_{j}  &  =\mathsf{E}\left[  \left.  \left(  f_{j+1}(X_{j+1}%
)-\mathsf{E}\left[  \left.  f_{\overline{\tau}_{j+2}}(X_{\overline{\tau}%
_{j+2}})\right\vert X_{j+1}\right]  \right)  1_{\{\tau_{j+1}^{\ast
}=j+1,\overline{\tau}_{j+1}>j+1\}}\right\vert X_{j}\right] \nonumber\\
&  +\mathsf{E}\left[  \left.  \left(  \mathsf{E}\left[  \left.  f_{\tau
_{j+2}^{\ast}}(X_{\tau_{j+2}^{\ast}})\right\vert X_{j+1}\right]
-f_{j+1}(X_{j+1})\right)  1_{\{\tau_{j+1}^{\ast}>j+1,\overline{\tau}%
_{j+1}=j+1\}}\right\vert X_{j}\right] \nonumber\\
&  +\mathsf{E}\left[  \left.  \mathsf{E}\left[  \left.  f_{\tau_{j+2}^{\ast}%
}(X_{\tau_{j+2}^{\ast}})-f_{\overline{\tau}_{j+2}}(X_{\overline{\tau}_{j+2}%
})\right\vert X_{j+1}\right]  1_{\{\tau_{j+1}^{\ast}>j+1,\overline{\tau}%
_{j+1}>j+1\}}\right\vert X_{j}\right] \nonumber\\
&  \eqqcolon T_{1}+T_{2}+\mathsf{E}\left[  \left.  \mathcal{R}_{j+1}%
1_{\{\tau_{j+1}^{\ast}>j+1,\overline{\tau}_{j+1}>j+1\}}\right\vert
X_{j}\right]  . \label{c1}%
\end{align}
For $T_{1}$ we have%
\begin{align*}
T_{1}  &  =\mathsf{E}\left[  \left.  \left(  f_{j+1}(X_{j+1})-\mathsf{E}%
\left[  \left.  f_{\tau_{j+2}^{\ast}}(X_{\tau_{j+2}^{\ast}})\right\vert
X_{j+1}\right]  \right)  1_{\{\tau_{j+1}^{\ast}=j+1,\overline{\tau}%
_{j+1}>j+1\}}\right\vert X_{j}\right] \\
&  +\mathsf{E}\left[  \left.  \left(  \mathsf{E}\left[  \left.  f_{\tau
_{j+2}^{\ast}}(X_{\tau_{j+2}^{\ast}})\right\vert X_{j+1}\right]
-\mathsf{E}\left[  \left.  f_{\overline{\tau}_{j+2}}(X_{\overline{\tau}_{j+2}%
})\right\vert X_{j+1}\right]  \right)  1_{\{\tau_{j+1}^{\ast}=j+1,\overline
{\tau}_{j+1}>j+1\}}\right\vert X_{j}\right]  ,
\end{align*}
and since
\begin{align*}
\overline{c}_{j+1}(X_{j+1})  &  \geq f_{j+1}(X_{j+1})\geq\mathsf{E}\left[
\left.  f_{\tau_{j+2}^{\ast}}(X_{\tau_{j+2}^{\ast}})\right\vert X_{j+1}\right]
\\
&  =c_{j+1}(X_{j+1})\geq\mathsf{E}\left[  \left.  f_{\overline{\tau}_{j+2}%
}(X_{\overline{\tau}_{j+2}})\right\vert X_{j+1}\right]
\end{align*}
on $\{\tau_{j+1}^{\ast}=j+1,\overline{\tau}_{j+1}>j+1\},$ we get
\begin{align}
0  &  \leq T_{1}\leq\mathsf{E}\left[  \left.  \left(  \overline{c}%
_{j+1}(X_{j+1})-c_{j+1}(X_{j+1})\right)  1_{\{\tau_{j+1}^{\ast}=j+1,\overline
{\tau}_{j+1}>j+1\}}\right\vert X_{j}\right] \nonumber\\
&  +\mathsf{E}\left[  \left.  \mathcal{R}_{j+1}1_{\{\tau_{j+1}^{\ast
}=j+1,\overline{\tau}_{j+1}>j+1\}}\right\vert X_{j}\right]  . \label{c2}%
\end{align}
Similarly, for $T_{2}$, we find
\begin{equation}
0\leq T_{2}\leq\mathsf{E}\left[  \left.  \left(  c_{j+1}(X_{j+1})-\overline
{c}_{j+1}(X_{j+1})\right)  1_{\{\tau_{j+1}^{\ast}>j+1,\overline{\tau}%
_{j+1}=j+1\}}\right\vert X_{j}\right]  . \label{c3}%
\end{equation}
Combining (\ref{c1}), (\ref{c2}), and (\ref{c3}), yields%
\[
\mathcal{R}_{j}\leq\mathsf{E}\left[  \left.  \left\vert \overline{c}%
_{j+1}(X_{j+1})-c_{j+1}(X_{j+1})\right\vert \right\vert X_{j}\right]
+\mathsf{E}\left[  \left.  \mathcal{R}_{j+1}\right\vert X_{j}\right]  .
\]
By straightforward induction, using the tower property and the final condition
$\mathcal{R}_{\mathcal{J}-1}=0,$ we then obtain%
\[
0\leq c_{j}\left(  X_{j}\right)  -\widetilde{c}_{j}\left(  X_{j}\right)
\leq\sum_{l=j+1}^{\mathcal{J}-1}\mathsf{E}\left[  \left.  \left\vert
\overline{c}_{l}(X_{l})-c_{l}(X_{l})\right\vert \right\vert X_{j}\right]  .
\]
By now taking $X_{l}=X_{l}^{j,\mathcal{U}}$ independent of $\mathcal{G}%
_{j+1},$ and then on both sides the $L_{p}$-norm due to the distribution of
$X_{j}^{j,\mathcal{U}}\sim\mu,$ applying the triangle inequality, and by using
that%
\begin{equation}
\mathsf{E}\left[  \mathsf{E}\left[  \left.  \left\vert \overline{c}_{l}%
(X_{l})-c_{l}(X_{l})\right\vert \right\vert X_{j}\right]  ^{p}\right]
\leq\mathsf{E}\left[  \left\vert \overline{c}_{l}(X_{l})-c_{l}(X_{l}%
)\right\vert ^{p}\right]  , \label{help}%
\end{equation}
we finally obtain (\ref{eq:bound_1}).

\subsection{Proof of Lemma \ref{lem23TV}}

\label{sec:proof-lemma-reflem23TV}

Let $X$ be a generic trajectory independent of $\mathcal{G}_{j+1}.$ Then for
$j<\mathcal{J}$ (see (\ref{cont1})),%
\begin{multline*}
\left\vert c_{j}\left(  X_{j}\right)  -\widetilde{c}_{j}\left(  X_{j}\right)
\right\vert \leq\left\vert \mathsf{E}\left[  \left.  \max\left[
f_{j+1}\left(  X_{j+1}\right)  ,c_{j+1}\left(  X_{j+1}\right)  \right]
-\max\left[  f_{j+1}\left(  X_{j+1}\right)  ,\overline{c}_{j+1}\left(
X_{j+1}\right)  \right]  \right\vert X_{j}\right]  \right\vert \\
\leq\mathsf{E}\left[  \left.  \left\vert c_{j+1}\left(  X_{j+1}\right)
-\overline{c}_{j+1}\left(  X_{j+1}\right)  \right\vert \right\vert
X_{j}\right]  .
\end{multline*}
By now taking $X_{j+1}=X_{j+1}^{j,\mathcal{U}}$ independent of $\mathcal{G}%
_{j+1},$ on both sides the $L_{p}$-norm due to the distribution of
$X_{j}^{j,\mathcal{U}}\sim\mu,$ and using (\ref{help}), we get (\ref{boundTV}).

\subsection{Proof of Theorem~\ref{thm_main}}

\label{sec:proof-theorem-refeq}

The theorem will be proved by induction. Due to Theorem \ref{psth} we have
(note that $\overline{c}_{j}$ and $\widetilde{c}_{j}$ are random functions)
almost surely that
\begin{align}
\mathbb{E}_{\mathcal{G}_{j+1}}\left[  \left\Vert \overline{c}_{j}%
-\widetilde{c}_{j}\right\Vert _{L_{2}(\mu_{j})}^{2}\right]   &  \leq C_{1}%
^{2}\frac{K}{M}+C_{2}^{2}\underset{w\,\in\,\text{\textsf{span}}\{\psi
_{1},...,\psi_{K}\}}{\inf}\left\Vert \widetilde{c}_{j}-w\right\Vert
_{L_{2}(\mu_{j})}^{2},\text{ \ \ hence}\nonumber\\
\left\Vert \overline{c}_{j}-\widetilde{c}_{j}\right\Vert _{L_{2}(\mu
_{j}\otimes\mathbb{P})}  &  \leq C_{1}\sqrt{\frac{K}{M}}+C_{2}\underset
{w\,\in\,\text{\textsf{span}}\{\psi_{1},...,\psi_{K}\}}{\inf}\left\Vert
\widetilde{c}_{j}-w\right\Vert _{L_{2}(\mu_{j}\otimes\mathbb{P})} \label{gy}%
\end{align}
for some $C_{1},C_{2}>0,$ which do not depend on $j,K,$ and $M$ and the
distribution $\mu_{j}.$ We now prove the statement (\ref{eq: main}) for
$\eta:=\max(C_{1},C_{2}).$ Since $\widetilde{c}_{\mathcal{J}-1}=c_{\mathcal{J}%
-1}$ for time $\mathcal{J}-1,$ (\ref{eq: main}) is implied by (\ref{gy}) with
$j=\mathcal{J}-1.$ Suppose the statement is proved for $0<j+1\leq
\mathcal{J}-1.$ Let us write,
\begin{multline}
\underset{w\,\in\,\text{\textsf{span}}\{\psi_{1},...,\psi_{K}\}}{\inf
}\left\Vert \widetilde{c}_{j}-w\right\Vert _{L_{2}(\mu_{j}\otimes\mathbb{P}%
)}\label{htr}\\
\leq\left\Vert \widetilde{c}_{j}-c_{j}\right\Vert _{L_{2}(\mu_{j}%
\otimes\mathbb{P})}+\underset{w\,\in\,\text{\textsf{span}}\{\psi_{1}%
,...,\psi_{K}\}}{\inf}\left\Vert c_{j}-w\right\Vert _{L_{2}(\mu_{j})}.
\end{multline}
By using (\ref{gy}), (\ref{htr}), and the unconditional expectation applied to
Lemma \ref{lem23} with $p=2$ we get
\begin{align}
\left\Vert \overline{c}_{j}-c_{j}\right\Vert _{L_{2}(\mu_{j}\otimes
\mathbb{P})}  &  \leq\left\Vert \overline{c}_{j}-\widetilde{c}_{j}\right\Vert
_{L_{2}(\mu_{j}\otimes\mathbb{P})}+\left\Vert \widetilde{c}_{j}-c_{j}%
\right\Vert _{L_{2}(\mu_{j}\otimes\mathbb{P})}\nonumber\\
&  \leq C_{1}\sqrt{\frac{K}{M}}+C_{2}\underset{w\,\in\,\text{\textsf{span}%
}\{\psi_{1},...,\psi_{K}\}}{\inf}\left\Vert c_{j}-w\right\Vert _{L_{2}(\mu
_{j})}\nonumber\\
&  +(C_{2}+1)\left\Vert \widetilde{c}_{j}-c_{j}\right\Vert _{L_{2}(\mu
_{j}\otimes\mathbb{P})}\nonumber\\
&  \leq\eta\varepsilon_{j,M,K}+\left(  \eta+1\right)  \sum_{l=j+1}%
^{\mathcal{J}-1}\left\Vert \overline{c}_{l}-c_{l}\right\Vert _{L_{2}(\mu
_{j,l}\otimes\mathbb{P})}. \label{1a}%
\end{align}
Next observe that%
\begin{multline}
\left\Vert \overline{c}_{l}-c_{l}\right\Vert _{L_{2}(\mu_{j,l}\otimes
\mathbb{P})}^{2}=\int_{\mathbb{R}^{d}}\mathsf{E}\left[  \left\vert
\overline{c}_{l}(x)-c_{l}(x)\right\vert ^{2}\right]  \frac{\mu_{j,l}(x)}%
{\mu_{l}(x)}\mu_{l}(x)dx\label{obs}\\
\leq\mathcal{R}_{\infty}\left\Vert \overline{c}_{l}-c_{l}\right\Vert
_{L_{2}(\mu_{l}\otimes\mathbb{P})}^{2},
\end{multline}
whence (\ref{1a}) yields,%
\begin{equation}
\left\Vert \overline{c}_{j}-c_{j}\right\Vert _{L_{2}(\mu_{j}\otimes
\mathbb{P})}\leq\eta\varepsilon_{j,M,K}+\mathcal{R}_{\infty}^{1/2}\left(
\eta+1\right)  \sum_{l=j+1}^{\mathcal{J}-1}\left\Vert \overline{c}_{l}%
-c_{l}\right\Vert _{L_{2}(\mu_{l}\otimes\mathbb{P})}. \label{in1}%
\end{equation}
Using the induction hypothesis we then have,%
\begin{multline}
\sum_{l=j+1}^{\mathcal{J}-1}\left\Vert \overline{c}_{l}-c_{l}\right\Vert
_{L_{2}(\mu_{j,l}\otimes\mathbb{P})}\leq\label{in2}\\
\sum_{l=j+1}^{\mathcal{J}-1}\eta\varepsilon_{l,M,K}\left(  1+\mathcal{R}%
_{\infty}^{1/2}\left(  \eta+1\right)  \right)  ^{\mathcal{J}-l-1}\\
\leq\eta\varepsilon_{j,M,K}\frac{\left(  1+\mathcal{R}_{\infty}^{1/2}\left(
\eta+1\right)  \right)  ^{\mathcal{J}-j-1}-1}{\mathcal{R}_{\infty}%
^{1/2}\left(  \eta+1\right)  }.
\end{multline}
By then combining (\ref{in1}) and (\ref{in2}) we get (\ref{eq: main}).

\subsection{Proof of Theorem~\ref{thm_mainTV}}

\label{sec:proof-theorem-refeqTV}

The proof, by induction, is similar to the one of Theorem~\ref{thm_main}. Due
to Theorem \ref{psth} we obtain again (see (\ref{gy}))
\begin{equation}
\left\Vert \overline{c}_{j}-\widetilde{c}_{j}\right\Vert _{L_{2}(\mu
_{j}\otimes\mathbb{P})}\leq C_{1}\sqrt{\frac{K}{M}}+C_{2}\underset
{w\,\in\,\text{\textsf{span}}\{\psi_{1},...,\psi_{K}\}}{\inf}\left\Vert
\widetilde{c}_{j}-w\right\Vert _{L_{2}(\mu_{j}\otimes\mathbb{P})} \label{gyTV}%
\end{equation}
for some $C_{1},C_{2}>0,$ which do not depend on $j,K,$ and $M$ and the
distribution $\mu_{j},$ where now $\widetilde{c}_{j}$ is defined by
(\ref{tiltv}). Since $\widetilde{c}_{\mathcal{J}-1}=c_{\mathcal{J}-1}$ for
time $\mathcal{J}-1,$ (\ref{eq: mainTV}) follows from (\ref{gyTV}) with
$\eta=\max(C_{1},C_{2})$ for $j=\mathcal{J}-1.$ Suppose (\ref{eq: mainTV}) is
proved for $0<j+1\leq\mathcal{J}-1.$ Now (\ref{htr}) applies in the present
setting also. Then by (\ref{gyTV}), (\ref{htr}), and the unconditional
expectation applied to Lemma \ref{lem23TV} with $p=2,$ we get analogue to
(\ref{1a})
\begin{equation}
\left\Vert \overline{c}_{j}-c_{j}\right\Vert _{L_{2}(\mu_{j}\otimes
\mathbb{P})}\leq\eta\varepsilon_{j,M,K}+\left(  \eta+1\right)  \left\Vert
\overline{c}_{j+1}-c_{j+1}\right\Vert _{L_{2}(\mu_{j,j+1}\otimes\mathbb{P})}.
\label{1aTV}%
\end{equation}
Next observe that, analogue to (\ref{obs}),
\[
\left\Vert \overline{c}_{j+1}-c_{j+1}\right\Vert _{L_{2}(\mu_{j,j+1}%
\otimes\mathbb{P})}^{2}\leq\mathcal{R}_{+}\left\Vert \overline{c}%
_{j+1}-c_{j+1}\right\Vert _{L_{2}(\mu_{j}\otimes\mathbb{P})}^{2}.
\]
Thus by (\ref{1aTV}) and the induction hypothesis,%
\begin{align*}
\left\Vert \overline{c}_{j}-c_{j}\right\Vert _{L_{2}(\mu_{j}\otimes
\mathbb{P})}  &  \leq\eta\varepsilon_{j,M,K}+\mathcal{R}_{+1}^{1/2}\left(
\eta+1\right)  \left\Vert \overline{c}_{j+1}-c_{j+1}\right\Vert _{L_{2}%
(\mu_{j+1}\otimes\mathbb{P})}\\
&  \leq\eta\varepsilon_{j,M,K}+\mathcal{R}_{\infty}^{1/2}\left(
\eta+1\right)  \eta\varepsilon_{j+1,M,K}\frac{\left(  \mathcal{R}_{+}%
^{1/2}\left(  \eta+1\right)  \right)  ^{\mathcal{J}-j-1}-1}{\mathcal{R}%
_{+}^{1/2}\left(  \eta+1\right)  -1}\\
&  \leq\eta\varepsilon_{j,M,K}\frac{\left(  \mathcal{R}_{+}^{1/2}\left(
\eta+1\right)  \right)  ^{\mathcal{J}-j}-1}{\mathcal{R}_{+}^{1/2}\left(
\eta+1\right)  -1}.
\end{align*}

\bibliographystyle{plain}

\end{document}